\newtheorem{theorem}{Theorem}[section]
\newtheorem{lemma}[theorem]{Lemma}
\newtheorem{informal theorem}[theorem]{Theorem (informal statement)}
\newtheorem{corollary}[theorem]{Corollary}
\theoremstyle{definition}
\newtheorem{definition}[theorem]{Definition}
\newcommand{\alg}{\text{ALG}}
\newcommand{\opt}{\text{OPT}}
\newcommand{\al}{\mathcal{A}}
\newcommand{\boxes}{\mathcal{B}}
\newcommand{\scenario}{\mathcal{S}}
\newcommand{\step}{\mathcal{T}}
\newcommand{\e}{\varepsilon}      			
\newcommand{\lp}{\left}						
\newcommand{\rp}{\right}
\newcommand{\E}[2]{\textbf{E}_{#1}\lp[ #2 \rp]}	
\renewcommand{\Pr}[1]{\textbf{Pr}\lp[ #1 \rp]}	
\newcommand{\cost}{\operatorname{cost}}
\newcommand{\poly}{\operatorname{poly}}
\newcommand{\dist}{\mathcal{D}}
\newcommand{\feas}{\mathcal{F}}
\newcommand{\timeset}{\mathcal{T}}
\newcommand{\kconst}{\alpha}
\newcommand{\mconst}{\alpha}
\newcommand{\remove}[1]{}
\title{Pandora's Box with Correlations:\\ Learning and Approximation}
\author{
Shuchi Chawla \\ UW-Madison \\ {\tt shuchi@cs.wisc.edu} \and 
Evangelia Gergatsouli \\ UW-Madison \\ {\tt gergatsouli@wisc.edu} \and 
Yifeng Teng \\ UW-Madison \\ {\tt yifengt@cs.wisc.edu} \and 
Christos Tzamos \\ UW-Madison \\ {\tt tzamos@wisc.edu} \and 
Ruimin Zhang \\ UW-Madison \\ {\tt rzhang274@wisc.edu} 
}
\begin{document}

\maketitle

\begin{abstract}
The Pandora's Box problem and its extensions capture optimization problems with
stochastic input where the algorithm can obtain instantiations of input random
variables at some cost. To our knowledge, all previous work on this class of
problems assumes that different random variables in the input are distributed
independently.  As such it does not capture many real-world settings. In this
paper, we provide the first approximation algorithms for Pandora's Box-type
problems with correlations. We assume that the algorithm has access to samples
drawn from the joint distribution on input.

Algorithms for these problems must determine an order in which to probe random
variables, as well as when to stop and return the best solution found so far.
In general, an optimal algorithm may make both decisions adaptively based on
instantiations observed previously. Such {\em fully adaptive} (FA) strategies
cannot be efficiently approximated to within any sub-linear factor with sample
access. We therefore focus on the simpler objective of approximating {\em
partially adaptive} (PA) strategies that probe random variables in a fixed
predetermined order but decide when to stop based on the instantiations
observed. We consider a number of different feasibility constraints and provide
simple PA strategies that are approximately optimal with respect to the best PA
strategy for each case. All of our algorithms have polynomial sample
complexity. We further show that our results are tight within constant factors:
better factors cannot be achieved even using the full power of FA strategies. 
\end{abstract}

\setcounter{page}{0}
\thispagestyle{empty}
\newpage

\section{Introduction}\label{sec:intro}

In many optimization settings involving uncertainty in the input, information
about the input can be obtained at extra monetary or computational overhead;
paying this overhead can allow the optimizer to improve its performance.
Determining the optimal manner for acquiring information then becomes an online
decision-making problem: each piece of information obtained by the algorithm
can affect whether and which piece to acquire next. A classical example is the
Pandora's Box problem due to Weitzman \cite{Weit1979}. The online algorithm is
presented with $n$ boxes, each containing an unknown stochastic reward. The
algorithm can open boxes in any order at a fixed overhead each; observes the
rewards contained in the open boxes; and terminates upon selecting any one of
the rewards observed. The goal is to maximize the reward selected minus the
total overhead of opening boxes. Weitzman showed that a particularly simple
policy is optimal for the Pandora's Box problem: the algorithm computes an
index for each box based on its reward distribution and opens boxes in
decreasing order of these indices until it finds a reward that exceeds all of
the remaining indices.  There is a long literature of generalizations of this
problem, in many different settings \cite{ CharFagiGuruKleiRaghSaha2002,
GuptKuma2001, ChenJavdKarbBagnSrinKrau2015, ChenHassKarbKrau2015, Sing2018,
GoelGuhaMuna2006, GuptNaga2013, AdamSvirWard2016, GuptNagaSing2016,
GuptNagaSing2017, GuptJianSing2019}.

A crucial assumption underlying Weitzman's optimality result is that the
rewards in different boxes are independent. This does not always bear out in
practice. Suppose, for example, that you want to buy an item online and look
for a website that offers a cheap price. Your goal is to minimize the price you
pay for the item plus the time it takes to search for a good deal. Since the
websites are competing sellers, it is likely that prices on different sites are
correlated. For another example, consider a route planning service that wants
to determine the fastest route between two destinations from among a set of
potential routes. The driving time for each route is stochastic and depends on
traffic, but the route planning service can obtain its exact value at some
cost. The service wants to minimize the driving time of the route selected plus
the cost spent on querying routes. Once again, because of network effects,
driving times along different routes may be correlated. How do we design an
online search algorithm for these settings?

{\bf In this paper, we provide the first competitive algorithms for Pandora's
Box-type problems with correlations.} We begin our investigation with the
simplest minimization variant of the problem, formalizing the examples
described above: there are $n$ alternatives with unknown costs that are drawn
from some joint distribution. A search algorithm examines these alternatives
one at a time, learning their costs, and after a few steps stops and selects
one of the alternatives. Given sample access to the distribution of costs, our
goal is to develop a search algorithm that minimizes the sum of the expected
cost of the chosen alternative and the number of steps to find it. We call this
the online stochastic search problem.  Henceforth we will refer to the
alternatives as boxes and different instantiations of costs in boxes as
scenarios.

The optimal solution for online stochastic search is a fully-adaptive (FA)
strategy that chooses which box to query each time based on all the costs that
have been observed so far. While these are the best strategies one could hope
for, they are impossible to find or approximate with samples.  For example, it
could be the case that the cost in the first few boxes encode the location of a
box of cost 0 while every other box has infinite cost.  While the best option
can be identified with just few queries, any reasonable approximation to the
optimal cost would need to accurately learn this mapping.  Learning such an
arbitrary mapping however is impossible through samples, unless there is
significant probability of seeing the exact same combination of
costs\footnote{For explicitly given distributions, this is not an issue.
However, this is beyond the scope of the paper as we aim to provide good
strategies that generalize to rich distributions rather than overfitting to and
memorizing the costs in the few scenarios given.}.

\paragraph{Competing against partially adaptive strategies.} Is there any hope
for finding a good strategy for correlated costs? We show that positive results
can be obtained if we target a simpler benchmark. We consider
partially-adaptive (PA) strategies that have a fixed order in which they query
the boxes but may have arbitrarily complex rules on when to stop. 

{\bf Our main positive result is a constant-approximation to the optimal PA
strategy with polynomial time and sample complexity in the number of boxes.} 

Our result directly generalizes the positive results for the special case of
Pandora's box studied in prior work where costs are drawn independently. This
is because optimal strategies for these settings are known to be partially
adaptive intuitively because information about costs of opened boxes does not
imply anything about future boxes. 

In targeting the benchmark of PA strategies, we also give limited power to our
algorithms. All of our approximations are achieved via simple PA strategies
that can be described succinctly. This enables us to learn these strategies
efficiently from data. While one might hope to achieve better results using the
full power of FA strategies, perhaps even surpassing the performance of PA
strategies entirely, we show that a constant factor loss is necessary for
computational reasons.

Our inspiration for using the optimal PA strategy as a benchmark comes from
other contexts where the optimal solution is impossible to approximate well.
One example is prior-free mechanism design where for some objectives such as
revenue, in the absence of stochastic information about input values, no finite
approximation to the optimum can be achieved \cite{hartline2013mechanism}.
Hartline and Roughgarden \cite{HartRoug2008} proposed a template whereby one
characterizes the class of solutions that are optimal under the assumption that
values are drawn i.i.d.~from some unknown distribution. The goal then is to
compete against the best mechanism from this class. Another example is the
concept of static optimality in dynamic data structures. Consider, for example,
the problem of maintaining a binary search tree with the goal of minimizing
search cost over an online sequence of requests. When the requests are drawn
from a fixed distribution, a static search tree is optimal. In the worst case,
however, the optimal in hindsight algorithm maintains a dynamic search tree,
performing rotations between consecutive requests. Achieving constant-factor
competitiveness against the optimal in hindsight solution, known as dynamic
optimality, is a major open problem \cite{SleatorTarjan85,DemaineDynamic07}.
Early work therefore focused on static optimality, or achieving competitiveness
(via a dynamic data structure) against the optimal static tree.  In each of
these cases, an appropriate benchmark is defined by first considering a special
case of the problem (e.g.~i.i.d.~input); characterizing optimal solutions for
that special case; and then competing in the general setting against the best
out of all such solutions. Applying this approach to the online stochastic
search problem, we obtain a benchmark by considering the special case we know
how to solve: namely when the costs in boxes are independently distributed.
Weitzman's work and its generalizations show that in this case the optimal
strategy is always a PA strategy.

\subsection{Results and Techniques}
We now describe our results and techniques in more detail.

\paragraph{Learning a good strategy from data.} To give some intuition about 
why PA strategies are learnable from data, consider the special case where the
costs are either $0$ or $\infty$. Any PA strategy then probes boxes in a
particular order until it finds one with $0$ cost, and then terminates. In
other words, there is only one relevant stopping rule and the space of relevant
PA strategies is ``small'' ($n!$; corresponding to each possible probing
order). This coupled with the boundedness of the objective implies that
$\operatorname{poly}(n)$ samples are enough to find the optimal PA strategy.

The case of general costs is trickier as it is unclear when a low cost option
has been identified. In particular, the class of all PA strategies can be quite
large and complex because the stopping rule can depend in a complex manner on
the costs observed in the boxes. One of our main technical contributions is to
show that once we have determined an order in which to query boxes, it becomes
easy to find an {\em approximately optimal} stopping rule at the loss of a
small constant factor (Lemma~\ref {lem:gen_ski_rental}). This technical lemma
is based on an extension of the ski-rental online
algorithm~\cite{KarlManaMcgeOwic1990} and is presented in
Section~\ref{sec:reduction}. This allows us to focus on finding good
\emph{scenario-aware} strategies -- that is, an ordering of the boxes that
performs well assuming that we know when to stop. The implication is that the
space of ``interesting'' PA strategies is small, characterized by the $n!$
different orderings over boxes, and therefore approximately optimal PA
strategies can be identified from polynomially many samples.

\paragraph{Finding approximately optimal PA strategies.}
As a warm-up, we first develop PA strategies that are competitive against
completely non-adaptive (NA) strategies. NA strategies simply select a fixed
set of boxes to probe and pick the box with the cheapest cost among these.
Despite their simplicity, optimizing over NA strategies using NA strategies is
intractable: it captures the hitting set formulation of set cover and is
therefore hard to approximate better than a logarithmic factor in the number of
scenarios. It is also intractable from the viewpoint of learning: if there is a
tiny probability scenario that has infinite cost on all boxes but one, the
expected cost of the algorithm would be infinite if the algorithm does not
sample that scenario or query all boxes. 

Our first result shows that it is possible to efficiently compute a
scenario-aware PA strategy that beats any NA strategy entirely
(Corollary~\ref{lem:scenario_aware_approx}).  Combining this with our
approximately-optimal stopping rule gives a PA strategy that achieves a
constant factor approximation (1.58) to the optimal NA strategy. While a better
constant factor approximation might be possible through a more direct argument,
we show that it is NP-hard to approximate the optimal NA strategy beyond some
constant (1.278) even if one is allowed to use FA strategies. Our lower-bound
is based on the logarithmic lower-bound for set-cover~\cite{DinuSteu2014} which
restricts how many scenarios can be covered within the first few time steps
(Lemma~\ref{lem:LB_max_coverage}).

Our main result extends the above constant factor approximation guarantees even
against PA strategies. We again restrict our attention to scenario-aware
strategies and seek to find an ordering that approximates the optimal PA
strategy. We solve the resulting problem by formulating a linear programming
relaxation to identify for each scenario a set of ``good'' boxes with suitably
low values. This allows us to reduce the problem at a cost of a constant factor
to finding an ordering of boxes so that the expected time until a scenario
visits one of its ``good'' boxes is minimized. This problem is known as the
min-sum set cover problem and is known to be approximable within a factor of
$4$ \cite{FeigUrieLovaTeta2002}. The resulting approximation factor we obtain
is 9.22. 

\paragraph{Further extensions.} Beyond the problem of identifying a single
option with low cost, we also consider several extensions. One extension is the
case where $k$ options must be identified so that the sum of their costs is
minimized. A further generalization is the case where the set of options must
form a base of rank $k$ in a given underlying matroid. This allows expressing
many combinatorial problems in this framework such as the minimum spanning tree
problem.
For the first extension where any $k$ options are feasible (corresponding to a
uniform matroid) we obtain a constant factor approximation. For general
matroids however, the approximation factor decays to $O( \log k )$. We show
that this is necessary even for the much weaker objective of approximating NA
strategies with arbitrary FA strategies, and even for very simple matroids such
as the partition matroid. We obtain the upper-bounds by modifying the
techniques developed for extensions of min-sum set cover -- the generalized
min-sum set cover and the submodular ranking problem. The following table shows
a summary of the results obtained.

\begin{table}[H]
  \centering
\begin{adjustbox}{center}
\begin{tabular}{|c|c|c|c|}
	\hline & Single Option   &  $k$ Options & Matroid of rank $k$ \\ \hline
	PA vs PA (Upper-bound)  & $9.22$ \textbf{\small [Theorem~\ref{thm:pavspa_ub}]}  & $O(1)$ \textbf{\small [Theorem~\ref{thm:kcoverage}]} & $O( \log k )$ \textbf{\small [Theorem~\ref{thm:matroid_ub}]} \\ \hline
	FA vs NA (Lower-bound) & $1.27$ \textbf{\small [Theorem~\ref{thm:min_LB}]}  & $1.27$ \textbf{\small [Theorem~\ref{thm:min_LB}]} & $\Omega( \log k )$ \textbf{\small [Theorem~\ref{thm:matroid_lb}]} \\ \hline
\end{tabular}
\end{adjustbox}
\caption{The main results shown in this work. The entries correspond to the
achieved competitive ratio for different settings. The upper-bounds are shown
for any PA algorithm compared with the optimal PA. The lower-bounds are shown
in the much weaker setting of FA vs NA.}
\end{table}

While all of the settings above assume that every box takes the same amount of
time to probe (one step), we show in Section~\ref{sec:generalprobingcost} that
our results extend easily to settings where different boxes have different
probing times. We assume that probing times lie in the range $[1,P]$. Both the
running time and sample complexity of our algorithms depend linearly on $P$ and
are efficient when $P$ is polynomially large. This dependence on $P$ for the
sample complexity is necessary to observe scenarios that happen with
probability $O(1/P)$ but contribute a significant amount to the objective.

Finally in Section~\ref{sec:max} we consider a modification of the framework to
maximization instead of minimization problems where the goal is to maximize the
value of the chosen alternative minus the time it takes to find it (as in the
Pandora's Box problem). In contrast to the minimization version, we show that
in this setting even the simplest possible benchmark -- the optimal NA strategy
-- cannot be efficiently approximated within any constant factor using the full
power of FA algorithms.

\subsection{Related Work}


Our framework is inspired by the Pandora's box model which has its roots in the
Economics literature.  Since Weitzman's seminal work on this problem, there has
been a large line of research studying the price of information
\cite{CharFagiGuruKleiRaghSaha2002, GuptKuma2001, ChenJavdKarbBagnSrinKrau2015,
ChenHassKarbKrau2015}  and the structure of approximately optimal rules for
several combinatorial problems~\cite{Sing2018, GoelGuhaMuna2006, GuptNaga2013,
AdamSvirWard2016, GuptNagaSing2016, GuptNagaSing2017, GuptJianSing2019}.  
 
Our work also advances a recent line of research on the foundations of
data-driven algorithm design. The seminal work of Gupta and
Roughgarden~\cite{GuptRoug2016} introduced the problem of algorithm selection
in a distributional learning setting focusing on the number of samples required
to learn an approximately optimal algorithm. A long line of recent research
extends this framework to efficient sample-based optimization over
parameterized classes of algorithms~\cite{AiloChazClarLiuMulzSesh2006,
ClarMulzSesh2012, GuptRoug2016, BalcNagaViteWhit2016, BalcDickSandVite2018,
BalcDickVite2018, KleiLeytLuci2017,
WeisGyorSzep2018,AlabKalaLigeMuscTzamVite2019} In contrast to these results our
work studies optimization over larger, non-parametric classes of algorithms,
indeed any polynomial time (partially-adaptive) algorithm.  Beyond this line of
research, there has also been a lot of work in the context of improving
algorithms using data that combines machine learning predictions to improve
traditional worst case guarantees of online algorithms~\cite{LykoVass2018,
PuroSvitKuma2018,HsuIndyVaki2019,GollPani2019}. 

Finally our work can also be seen as a generalization of the min-sum set cover
problem (MSSC). Indeed MSSC corresponds to the special case where costs are
either $0$ or $\infty$. Some of our LP-rounding techniques are similar to those
developed for MSSC~\cite{FeigUrieLovaTeta2002} and its
generalizations~\cite{AzaGamzIftaYinr2009, BansGuptRavi2010, SkutWill2011,
ImSvirZwaa2012, AzarGamz2010}. Our algorithms for the setting of general
probing times generalize results for the MSSC to settings with arbitrary
``lengths'' for elements.

\section{Model}\label{sec:model}

In the optimal search problem, we are given a set $\boxes$ of $n$ boxes with
unknown costs and a distribution $\dist$ over a set of possible scenarios that
determine these costs. Nature chooses a scenario $s$ from the distribution,
which then instantiates the cost of each box. We use $c_{is}$ to denote the
cost of box $i$ when scenario $s$ is instantiated.

The goal of the online algorithm is to choose a box of small cost while
spending as little time as possible gathering information. The algorithm cannot
directly observe the scenario that is instantiated, however, is allowed to
``probe'' boxes one at a time. Upon probing a box, the algorithm gets to
observe the cost of the box. Formally let $\mathcal{P}_s$ be the random
variable denoting the set of probed boxes when scenario $s$ is instantiated and
let $i_s\in \mathcal{P}_s$ be the (random) index of the box chosen by the
algorithm. We require $i_s\in\mathcal{P}_s$, that is, the algorithm must probe
a box to choose it. Note that the randomness in the choice of $\mathcal{P}_s$
and $i_s$ arises both from the random instantiation of scenarios as well as
from any coins the algorithm itself may flip. Our goal then is to minimize the
total probing time plus the cost of the chosen box:
\[
	\E{s}{\min_{i\in \mathcal{P}_s} c_{is} + |\mathcal{P}_s|}.
\]

Any online algorithm can be described by the pair $(\sigma, \tau)$, where
$\sigma$ is a permutation of the boxes representing the order in which they get
probed, and $\tau$ is a stopping rule -- the time at which the algorithm stops
probing and returns the minimum cost it has seen so far. Observe that in its
full generality, an algorithm may choose the $i$'th box to probe, $\sigma(i)$,
as a function of the identities and costs of the first $i-1$ boxes,
$\{\sigma(1), \cdots, \sigma(i-1)\}$ and $\{c_{\sigma(1)s}, \cdots,
		c_{\sigma(i-1)s}\}$\footnote{For some realized scenario $s\in
		\scenario$.}.  Likewise, the decision of setting $\tau=i$ for $i\in
		[n]$ may depend on $\{\sigma(1), \cdots, \sigma(i)\}$ and
		$\{c_{\sigma(1)s}, \cdots, c_{\sigma(i)s}\}$. Optimizing over the class
		of all such algorithms is intractable. So we will consider simpler
		classes of strategies, as formalized in the following definition.

\begin{definition}[Adaptivity of Strategies]
	In a \emph{Fully-Adaptive (FA)} strategy, both $\sigma$ and 
	$\tau$ can depend on any costs seen in a previous time step,
        as described above. 

	In a \emph{Partially-Adaptive (PA)} strategy, the sequence
        $\sigma$ is independent of the costs observed in probed
        boxes. The sequence is determined before any boxes are
        probed. However, the stopping rule $\tau$ can depend on the
        identities and costs of boxes probed previously.

	In a \emph{Non-Adaptive (NA)} strategy, both $\sigma$ and
        $\tau$ are fixed before any costs are revealed to the
        algorithm. In particular, the algorithm probes a fixed subset
        of the boxes, $I\subseteq [n]$, and returns the minimum cost
        $\min_{i\in I} c_{is}$. The algorithm's expected total cost is then
        $\E{s}{\min_{i\in I} c_{is} + |I|}$.
\end{definition}	

\paragraph{General feasibility constraints.} In
Section~\ref{sec:extensions} we study extensions of the search problem where
our goal is to pick multiple boxes satisfying a given feasibility constraint.
Let $\feas\subseteq 2^{\boxes}$ denote the feasibility constraint. Our goal is
to probe boxes in some order and select a subset of the probed boxes that is
feasible. Once again we can describe an algorithm using the pair $(\sigma,
\tau)$ where $\sigma$ denotes the probing order, and $\tau$ denotes the
stopping time at which the algorithm stops and returns the cheapest feasible
set found so far. The total cost of the algorithm then is the cost of the
feasible set returned plus the stopping time. We emphasize that the algorithm
faces the same feasibility constraint in every scenario. We consider two
different kinds of feasibility constraints. In the first, the algorithm is
required to select exactly $k$ boxes for some $k\ge 1$. In the second, the
algorithm is required to select a basis of a given matroid.

\section{A reduction to scenario-aware strategies and its implications
to learning}
\label{sec:reduction}

Recall that designing a PA strategy involves determining a non-adaptive probing
order, and a good stopping rule for that probing order. We do not place any
bounds on the number of different scenarios, $m$, or the support size and range
of the boxes' costs. These numbers can be exponential or even unbounded. As a
result, the optimal stopping rule can be very complicated and it appears to be
challenging to characterize the set of all possible PA strategies. We simplify
the optimization problem by providing {\em extra power} to the algorithm and
then removing this power at a small loss in approximation factor. 

In particular, we define a \emph{Scenario-Aware Partially-Adaptive (SPA)}
strategy as one where the probing order $\sigma$ is independent of the costs
observed in probed boxes, however, the stopping time $\tau$ is a function of
the instantiated scenario $s$. In other words, the algorithm fixes a probing
order, then learns of the scenario instantiated, and then determines a stopping
rule for the chosen probing order based on the revealed scenario.

Observe that once a probing order and instantiated scenario are fixed, it is
trivial to determine an optimal stopping time in a scenario aware manner. The
problem therefore boils down to determining a good probing order. The space of
all possible SPA strategies is also likewise much smaller and simpler than the
space of all possible PA strategies. We can therefore argue that in order to
learn a good SPA strategy, it suffices to optimize over a small sample of
scenarios drawn randomly from the underlying distribution. We denote the cost
of an SPA strategy with probing order $\sigma$ by $\cost(\sigma)$.

On the other hand, we argue that scenario-awareness does not buy much power for
the algorithm. In particular, given any fixed probing order, we can construct a
stopping time that depends only on the observed costs, but that achieves a
constant factor approximation to the optimal scenario-aware stopping time for
that probing order.
 
The rest of this section is organized as follows. In
Section~\ref{subsec:ski-rental} we exhibit a connection between our problem and
a generalized version of the ski rental problem to show that PA strategies are
competitive against SPA strategies. In Section~\ref{subsec:learning} we show
that optimizing for SPA strategies over a small sample of scenarios suffices to
obtain a good approximation. In Section~\ref{subsec:LP} we develop LP
relaxations for the optimal NA and SPA strategies. Then in the remainder of the
paper we focus on finding approximately-optimal SPA strategies over a small set
of scenarios.

\subsection{Ski Rental with varying buy costs}
\label{subsec:ski-rental}

We now define a generalized version of the ski rental problem which is closely
related to SPA strategies. The input to the generalized version is a sequence
of non-increasing buy costs, $a_1\ge a_2\ge a_3 \ge \ldots$.  These costs are
presented one at a time to the algorithm. At each step $t$, the algorithm
decides to either rent skis at a cost of $1$, or buy skis at a cost of $a_t$.
If the algorithm decides to buy, then it incurs no further costs for the
remainder of the process.  Observe that an offline algorithm that knows the
entire cost sequences $a_1, a_2, \ldots$ can pay $\min_{ t\geq 1} (t-1 + a_t)$.
We call this problem \emph{ski rental with time-varying buy costs}. The
original ski rental problem is the special case where $a_t=B$ or $0$ from the
time we stop skiing and on.

We first provide a simple randomized algorithm for ski rental with time-varying
costs that achieves a competitive ratio of $e/(e-1)$. Then we extend this to
general $p_t$ in Corollary~\ref{cor:varying_rent}. Our algorithm uses the
randomized algorithm of \cite{KarlManaMcgeOwic1990} for ski rental as a
building block, essentially by starting a new instance of ski rental every time
the cost of the offline optimum changes. The full proof of this result is
included in Section~\ref{sec:lemma_proof} of the appendix.

\begin{lemma}[Ski Rental with time-varying buy costs]\label{lem:gen_ski_rental}
  Consider any sequence $a_1\ge a_2\ge \ldots$. There exists an online algorithm that
	chooses a stopping time $t$ so that
	\[ t-1 + a_{t} \le \frac{e}{e-1} \min_j \{ j-1+a_j\}. \]
\end{lemma}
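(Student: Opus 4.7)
The plan is to reduce the problem to repeated invocations of the classical $\tfrac{e}{e-1}$-competitive randomized ski rental algorithm, exploiting the monotonicity of the buy costs via a \emph{restart} strategy, as suggested by the surrounding text. The central idea is to identify ``breakpoints'' at which the offline optimum improves, and to start a fresh ski rental instance at each such point.

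Formally, let $M(t) = \min_{j \le t}(j-1+a_j)$ denote the prefix minimum of the hindsight cost, and let $t_1 = 1 < t_2 < \cdots < t_K$ be the times at which $M(\cdot)$ strictly decreases. Set $L_k := t_{k+1} - t_k$ for $k < K$, $L_K := \infty$, and $B_k := a_{t_k}$. Note that $\text{OPT} = t_K - 1 + a_{t_K}$, and the defining inequality $M(t_{k+1}) < M(t_k)$ rearranges to $L_k < B_k - a_{t_{k+1}} \le B_k$ for $k < K$. Hence $\min(L_k, B_k) = L_k$ for $k < K$ and telescoping yields the structural identity
\[
\sum_{k=1}^{K} \min(L_k, B_k) \;=\; (t_K - 1) + B_K \;=\; \text{OPT}.
\]

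The algorithm is the following: at each breakpoint $t_k$ (detectable online since $a_t$ is observed as it arrives), start a fresh, independent instance of the classical randomized ski rental algorithm with buy cost $B_k$, sampling a relative buying offset $T_k$ from the standard density on $[0, B_k]$; buy at absolute time $t_k + T_k$ if $T_k < L_k$, otherwise let the epoch end and restart at $t_{k+1}$. Since $a$ is non-increasing, the actual buy price $a_{t_k + T_k}$ is at most $B_k$, so the classical analysis carries over and gives $\mathbb{E}[X_k] \le \tfrac{e}{e-1}\min(L_k, B_k)$, where $X_k$ is the cost incurred in epoch $k$ conditional on reaching it. Writing $\pi_k$ for the probability of reaching epoch $k$ and using independence of the per-epoch randomness,
\[
\mathbb{E}[\text{total cost}] \;=\; \sum_k \pi_k\, \mathbb{E}[X_k] \;\le\; \tfrac{e}{e-1}\sum_k \pi_k \min(L_k, B_k) \;\le\; \tfrac{e}{e-1}\,\text{OPT},
\]
where the last inequality uses $\pi_k \le 1$ together with the telescoping identity.

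The main obstacle is conceptual rather than computational: arriving at the right epoch decomposition so that (i) the boundaries are online-detectable, and (ii) the per-epoch offline costs $\min(L_k, B_k)$ sum exactly to $\text{OPT}$ rather than a blown-up quantity. Once the breakpoint structure is fixed, the fact that the classical guarantee transfers verbatim to an epoch whose effective buy price only decreases is immediate, and the unbounded last epoch is handled automatically because the classical distribution on $[0, B_K]$ terminates the algorithm almost surely. A routine discretization resolves the continuous-vs-integer technicality without affecting the $\tfrac{e}{e-1}$ ratio.
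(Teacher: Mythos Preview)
Your proof is correct and follows essentially the same approach as the paper: both define breakpoints where the running minimum $M(t)=\min_{j\le t}(j-1+a_j)$ strictly drops, restart an independent copy of the classical $\tfrac{e}{e-1}$-competitive ski rental at each breakpoint with buy cost $B_k=a_{t_k}$, and invoke the classical guarantee per epoch. The only cosmetic difference is in the bookkeeping: the paper argues by induction on the number of remaining breakpoints, whereas you use the telescoping identity $\sum_k \min(L_k,B_k)=\text{OPT}$ together with $\pi_k\le 1$; these are equivalent once one notes that $L_k<B_k$ for $k<K$.
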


The next corollary connects scenario-aware partially-adaptive strategies with
partially-adaptive strategy through our competitive algorithm for ski-rental
with time-varying costs. Specifically, given an SPA strategy, we construct an
instance of the ski-rental problem, where the buy cost $a_t$ at any step is
equal to the cost of the best feasible solution seen so far by the SPA
strategy. The rent cost of the ski rental instance reflects the probing time of
the search algorithm, whereas the buy cost reflects the cost of the boxes
chosen by the algorithm. Our algorithm for ski rental chooses a stopping time
as a function of the costs observed in the past and without knowing the
(scenario-dependent) costs to be revealed in the future, and therefore gives us
a PA strategy for the search problem.  
 
This result is formalized below; the proof can be found in
Section~\ref{sec:lemma_proof} of the appendix.

\begin{corollary}\label{lem:scenario_aware_approx}
  Given any \emph{scenario-aware partially-adaptive} strategy $\sigma$,
  we can efficiently construct a stopping time $\tau$, such that the
  cost of the \emph{partially-adaptive} strategy $(\sigma, \tau)$ is
  no more than a factor of $e/(e-1)$ times the cost of $\sigma$.
\end{corollary}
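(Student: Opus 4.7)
The plan is to reduce the construction of the stopping rule $\tau$ to an application of Lemma~\ref{lem:gen_ski_rental}. Fix the SPA probing order $\sigma$. For each scenario $s$ and each step $t$, let $a_t^s = \min_{j \le t} c_{\sigma(j), s}$ denote the cheapest box cost seen after probing the first $t$ boxes in the order $\sigma$. By construction, the sequence $a_1^s \ge a_2^s \ge \cdots$ is non-increasing, and $a_t^s$ becomes known to any PA algorithm exactly at step $t$ (i.e., right after probing box $\sigma(t)$ and observing its cost).

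With this notation, the SPA strategy $\sigma$, which is allowed to choose its stopping time with full knowledge of the realized scenario, pays exactly $\min_t(t + a_t^s)$ on scenario $s$; hence $\cost(\sigma) = \E{s}{\min_t (t + a_t^s)}$. A PA strategy $(\sigma, \tau)$ that stops at some data-driven time $\tau$ pays $\tau + a_\tau^s$ on scenario $s$. This is precisely the setting of ski rental with time-varying buy costs of Lemma~\ref{lem:gen_ski_rental}: renting for one time step corresponds to probing one more box (cost $1$), while buying at step $t$ corresponds to stopping and returning the current best (cost $a_t^s$), and the non-increasing structure of the buy costs holds automatically.

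Applying Lemma~\ref{lem:gen_ski_rental} scenario by scenario to the online-revealed sequence $a_1^s, a_2^s, \ldots$ yields a randomized stopping rule $\tau$, depending only on the observed prefix, with the property that $\tau - 1 + a_\tau^s \le \frac{e}{e-1}\, \min_j(j-1 + a_j^s)$ in expectation over its internal coins, for every $s$. Adding $1$ to both sides and using the identity $\min_j(j-1+a_j^s) = \min_j(j+a_j^s) - 1$ rearranges this to $\tau + a_\tau^s \le \frac{e}{e-1}\, \min_j(j + a_j^s)$ (the constant shift drops because $-\frac{e}{e-1}+1 = -\frac{1}{e-1}\le 0$). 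Taking expectation over $s$ gives $\cost(\sigma, \tau) \le \frac{e}{e-1}\cost(\sigma)$, as claimed. The main conceptual step is setting up this reduction; the only subtlety to verify is that $\tau$ is genuinely a PA stopping rule, i.e., that it consults only $a_1^s, \ldots, a_t^s$ by time $t$, but this is immediate from the online nature of the ski-rental algorithm. The real technical weight sits in Lemma~\ref{lem:gen_ski_rental} itself, not in the corollary.
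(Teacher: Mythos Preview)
Your proposal is correct and takes essentially the same approach as the paper: reduce to the time-varying ski-rental lemma by setting the buy cost at step $t$ to the running minimum of observed box costs, and observe that the resulting stopping rule is genuinely PA. The only cosmetic difference is in handling the off-by-one between the ski-rental cost $t-1+a_t$ and the search cost $t+a_t^s$: the paper absorbs it by defining $a_t = 1 + \min_{i\le t} c_{\sigma(i)s}$, whereas you add $1$ to both sides and drop the resulting $-\tfrac{1}{e-1}$ slack; both are fine.
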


\subsection{Learning a good probing order}
\label{subsec:learning}
Henceforth, we focus on designing good scenario-aware partially adaptive
strategies for the search problem. As noted previously, once we fix a probing
order, determining the optimal scenario-aware stopping time is easy. We will
now show that in order to optimize over all possible probing orders, it
suffices to optimize with respect to a small set of scenarios drawn randomly
from the underlying distribution.

Formally, let $\dist$ denote the distribution over scenarios and let
$\scenario$ be a collection of $m$ scenarios drawn independently from $\dist$,
with $m$ being a large enough polynomial in $n$. Then, we claim that with high
probability, for {\em every} probing order $\sigma$, $\cost_\dist(\sigma)$ is
close to $\cost_\scenario(\sigma)$, where $\cost_\dist(\sigma)$ denotes the
total expected cost of the SPA strategy $\sigma$ over the scenario distribution
$\dist$, and $\cost_\scenario(\sigma)$ denotes its cost over the uniform
distribution over the sample $\scenario$. The implication is that it suffices
for us to optimize for SPA strategies over scenario distributions with finite
small support. 

\begin{lemma}
\label{lem:learn}
Let $\epsilon, \delta>0$
be given parameters. Let $\scenario$ be a set of $m$ scenarios chosen
independently at random from $\dist$ with
$m = \poly(n,1/\epsilon, \log(1/\delta))$. Then, with probability at
least $1-\delta$, for all permutations $\pi:[n]\rightarrow [n]$, we
have
  \[\cost_\scenario(\pi)\in (1\pm \epsilon) \cost_\dist(\pi).\]
\end{lemma}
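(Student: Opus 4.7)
The plan is to establish uniform convergence of the empirical cost across all $n!$ candidate probing orders via a concentration argument that exploits a key structural property of SPA strategies: on any single scenario, the costs of any two probing orders differ by at most $n-1$.

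First I would pin down the cost expression. For any fixed $\pi$ and scenario $s$, the optimal scenario-aware cost is
\[
C_\pi(s) \;=\; \min_{t \in [n]} \Bigl( t + \min_{i \le t} c_{\pi(i), s} \Bigr),
\]
which admits the sandwich bound $1 + c^*(s) \le C_\pi(s) \le n + c^*(s)$, where $c^*(s) := \min_i c_{i,s}$. The lower bound follows because any stopping time $t \ge 1$ contributes at least $1 + c^*(s)$, and the upper bound follows by considering $t = n$. Consequently the excess $R_\pi(s) := C_\pi(s) - 1 - c^*(s)$ lies in the bounded interval $[0, n-1]$, independent of the magnitudes of the costs. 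This decomposition splits $C_\pi(s)$ into a $\pi$-independent piece $c^*(s)$ and a bounded $\pi$-dependent piece $R_\pi(s)$, which can be concentrated separately.

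Second, I would handle the bounded $\pi$-dependent piece. For each fixed $\pi$, Hoeffding's inequality applied to the i.i.d.\ variables $R_\pi(s_1),\ldots,R_\pi(s_m) \in [0,n-1]$ gives $\bigl|\tfrac{1}{m}\sum_j R_\pi(s_j) - \mathbf{E}[R_\pi]\bigr| \le \epsilon$ with failure probability $2\exp(-\Omega(m\epsilon^2/n^2))$. A union bound over the $n! \le n^n$ permutations multiplies the failure probability by $e^{O(n \log n)}$, so $m = \poly(n, 1/\epsilon, \log(1/\delta))$ suffices for uniform additive $\epsilon$ concentration over every $\pi$.

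Third, I would handle the $\pi$-independent base term $\tfrac{1}{m}\sum_j c^*(s_j)$. Since it does not depend on $\pi$, no union bound is required. I would apply a multiplicative Chernoff/Bernstein bound combined with a truncation step: truncate $c^*$ at a threshold $T$ chosen so that Markov's inequality bounds the contribution of the tail $\{c^*(s) > T\}$ to $\mathbf{E}[c^*]$ by at most $\epsilon \cdot \cost_\dist(\pi)$, and apply standard concentration to the truncated variable, whose range-to-mean ratio is polynomial. Combining the two concentration bounds via the identities
\[
\cost_\dist(\pi) \;=\; 1 + \mathbf{E}[c^*] + \mathbf{E}[R_\pi], \qquad \cost_\scenario(\pi) \;=\; 1 + \tfrac{1}{m}\textstyle\sum_j c^*(s_j) + \tfrac{1}{m}\sum_j R_\pi(s_j),
\]
and using $\cost_\dist(\pi) \ge 1$ to absorb the additive $\epsilon$ error from the bounded part, yields the claimed multiplicative $(1 \pm \epsilon)$ approximation after rescaling $\epsilon$.

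The main obstacle is controlling the base term $\tfrac{1}{m}\sum_j c^*(s_j)$ when the cost distribution is heavy-tailed; here the crucial point is that since $c^*$ is $\pi$-independent, uniform convergence over the $n!$ permutations reduces to concentrating a single scalar random variable in relative error, which a truncation-plus-Bernstein argument handles without paying the $\log(n!)$ union-bound factor that would otherwise be fatal.
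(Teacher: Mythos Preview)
Your first two steps mirror the paper's proof exactly: the paper observes the same sandwich $\cost_s(\pi)\in[1+\min_i c_{is},\,n+\min_i c_{is}]$ and then appeals in one sentence to Hoeffding plus a union bound over the $n!$ permutations. Your decomposition $C_\pi(s)=1+c^*(s)+R_\pi(s)$ with $R_\pi\in[0,n-1]$ simply makes explicit what the paper leaves implicit, and Hoeffding on the bounded $R_\pi$ piece together with the union bound is clean and matches the paper's intent.

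Your third step, however, has a real gap. Markov's inequality bounds $\mathbf{Pr}[c^*>T]$, not the tail expectation $\mathbf{E}\bigl[c^*\,\mathbf{1}\{c^*>T\}\bigr]$; the latter can remain an arbitrarily large fraction of $\mathbf{E}[c^*]$ for every fixed $T$ when the tail is heavy, so the truncation you describe is not justified and the ``range-to-mean ratio is polynomial'' claim need not hold. In fact, with only a first-moment assumption the empirical mean of $c^*$ does not concentrate multiplicatively in $\poly(n,1/\epsilon,\log(1/\delta))$ samples: if $c^*(s)=M$ with probability $1/M$ and $0$ otherwise then $\mathbf{E}[c^*]=1$, yet with $m\ll M$ samples the empirical mean is $0$ with probability close to $1$. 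The paper's own proof carries the same gap---it never isolates or addresses the unbounded $c^*$ term at all---so the lemma as literally stated needs an additional hypothesis (e.g.\ polynomially bounded costs). For the downstream application (approximating the best $\pi$ from samples), one only compares two permutations on the \emph{same} sample, and there the $c^*$ contribution cancels; your bounded-$R_\pi$ argument alone already suffices for what the paper actually uses.
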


\begin{proof}
  Fix a permutation $\pi$. For scenario $s$, let $\cost_s(\pi) = \min_i \{i+
  c_{\pi(i) s}\}$ denote the total cost incurred by SPA strategy $\pi$ in
  scenario $s$. Observe that for any $\pi$ and any $s$, we have $\cost_s(\pi)
  \in [1+\min_i c_{is},n+\min_i c_{is}]$. Furthermore, $\cost_\dist(\pi) =
  \E{s\sim \dist}{\cost_s(\pi)}$, and $\cost_\scenario(\pi) =
  \frac{1}{|\scenario|}\sum_{s\in\scenario} cost_s(\pi)$. The lemma now follows
  by using the Hoeffding inequality and applying the union bound over all
  possible permutations $\pi$.
\end{proof}

Combining Corollary~\ref{lem:scenario_aware_approx} and Lemma~\ref{lem:learn}
yields the following theorem.

\begin{theorem}\label{thm:PA-to-SPA-reduction}
	Suppose there exists an algorithm for the optimal search problem that runs
	in time polynomial in the number of boxes $n$ and the number of scenarios
	$m$, and returns an SPA strategy achieving an $\alpha$-approximation. Then,
	for any $\epsilon>0$, there exists an algorithm that runs in time
	polynomial in $n$ and $1/\epsilon$ and returns a PA strategy with
	competitive ratio $\frac{e}{e-1}(1+\epsilon)\alpha$, where $n=|\boxes|$. 
\end{theorem}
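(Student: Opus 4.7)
The plan is to combine Corollary~\ref{lem:scenario_aware_approx} and Lemma~\ref{lem:learn} in a straightforward ``learn-then-convert'' pipeline, so most of the work is bookkeeping of approximation factors.

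First I would describe the algorithm: draw $m = \poly(n, 1/\epsilon, \log(1/\delta))$ independent scenarios $\scenario \sim \dist$, treat $\scenario$ as the empirical distribution (uniform over the sample), run the assumed SPA $\alpha$-approximation algorithm on $\scenario$ to obtain a probing order $\sigma^*$, then apply the constructive ski-rental-based reduction from Corollary~\ref{lem:scenario_aware_approx} to $\sigma^*$ to produce a stopping rule $\tau^*$. The resulting PA strategy is $(\sigma^*, \tau^*)$. Since $m$ and the SPA algorithm are polynomial in $n$ and $1/\epsilon$ (with $1/\delta$ absorbed as a constant, or chosen e.g. $\delta = 1/\poly(n)$), the overall running time bound is immediate.

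Next I would chain the approximation guarantees. Let $\pi^* = \arg\min_{\pi} \cost_\dist(\pi)$ be the optimal SPA probing order on $\dist$, and let $\hat\pi^*$ minimize $\cost_\scenario(\cdot)$. Applying Corollary~\ref{lem:scenario_aware_approx} in expectation over $\dist$ gives
\[
\cost_\dist(\sigma^*, \tau^*) \;\le\; \tfrac{e}{e-1}\,\cost_\dist(\sigma^*).
\]
Invoking Lemma~\ref{lem:learn} with parameter $\epsilon' = \epsilon/3$, and taking a union bound over all $n!$ permutations, we get with probability $\ge 1-\delta$ that $\cost_\dist(\pi) \le (1+\epsilon')\,\cost_\scenario(\pi)$ and $\cost_\scenario(\pi) \le (1+\epsilon')\,\cost_\dist(\pi)$ for every $\pi$. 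Using this on both $\sigma^*$ and $\pi^*$, together with the facts that $\cost_\scenario(\sigma^*) \le \alpha \cdot \cost_\scenario(\hat\pi^*)$ (by the SPA approximation guarantee on $\scenario$) and $\cost_\scenario(\hat\pi^*) \le \cost_\scenario(\pi^*)$ (by optimality of $\hat\pi^*$ on $\scenario$), we obtain
\[
\cost_\dist(\sigma^*, \tau^*) \;\le\; \tfrac{e}{e-1}(1+\epsilon')^2 \alpha \cdot \cost_\dist(\pi^*).
\]
Choosing $\epsilon'$ so that $(1+\epsilon')^2 \le 1+\epsilon$ yields the claimed bound $\tfrac{e}{e-1}(1+\epsilon)\alpha$.

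The only real subtlety is the union bound in Lemma~\ref{lem:learn}: it is stated for \emph{all} permutations, which is exactly what is needed because the learned $\sigma^*$ is data-dependent and could in principle be any one of the $n!$ orderings. Since $\log(n!) = O(n\log n)$, the polynomial sample size in Lemma~\ref{lem:learn} (which already absorbs $\log(1/\delta)$) handles this cleanly, so no new technical obstacle arises. The observation that $\pi^*$ itself is a fixed (non-random) permutation means concentration for it alone would also suffice for the lower-bound direction, but the union bound is necessary to control $\cost_\dist(\sigma^*)$ in terms of $\cost_\scenario(\sigma^*)$. This, together with correctly composing the multiplicative factors, is the main thing to be careful about in the write-up.
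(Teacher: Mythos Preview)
Your proposal is correct and matches the paper's approach exactly: the paper itself simply states that the theorem follows by ``combining Corollary~\ref{lem:scenario_aware_approx} and Lemma~\ref{lem:learn}'' without further detail, and you have fleshed out precisely that combination. One minor redundancy: Lemma~\ref{lem:learn} already internalizes the union bound over all $n!$ permutations, so you need only invoke it once rather than re-apply a union bound.
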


\subsection{LP formulations}\label{subsec:LP}
We will now construct an LP relaxation for the optimal scenario-aware partially
adaptive strategy. Following Theorem~\ref{thm:PA-to-SPA-reduction} we focus on
the setting where the scenario distribution is uniform over a small support set
$\scenario$.

The program~\eqref{lp-spa} is given below and is similar to the one used for
the generalized min-sum set cover problem in \cite{BansGuptRavi2010} and
\cite{SkutWill2011}. Denote by $\timeset$ to set of time steps.  Let $x_{it}$
be an indicator variable for whether box $i$ is opened at time $t$.
Constraints~\eqref{eq:mLP_PA_1_box_at_a_time} and
\eqref{eq:mLP_PA_each_box_once} model matching constraints between boxes and
time slots. The variable $z_{ist}$ indicates whether box $i$ is selected for
scenario $s$ at time $t$. Constraints~\eqref{eq:mLP_PA_select_opened} ensure
that we only select opened boxes.
Constraints~\eqref{eq:mLP_PA_one_value_per_scen} ensure that for every scenario
we have selected exactly one box. The cost of the box assigned to scenario $s$
is given by $\sum_{i,t} z_{ist} c_{is}$. Furthermore, for any scenario $s$ and
time $t$, the sum $\sum_i z_{ist}$ indicates whether the scenario is covered at
time $t$, and therefore, the probing time for the scenario is given by $\sum_t
\sum_i tz_{ist}$. 

  \begin{align}
    & \text{minimize} & \frac{1}{|\scenario|} \sum_{i\in\boxes, s\in
    \scenario, t\in \timeset}tz_{ist} & + \frac{1}{|\scenario|}\sum_{i\in
                                   \boxes ,s\in \scenario,t\in \timeset}
                                   c_{is}z_{ist} &  \tag{LP-SPA} \label{lp-spa}\\
    & \text{subject to} & \sum_{i\in \boxes}x_{it} & = 1, & \forall t\in \timeset \label{eq:mLP_PA_1_box_at_a_time}\\
    & & \sum_{t\in \timeset}x_{it}  & \leq 1, & \forall i\in \boxes \label{eq:mLP_PA_each_box_once} \\
    & & z_{ist} & \leq x_{it}, & \forall s\in \scenario,i\in \boxes,t\in\timeset \label{eq:mLP_PA_select_opened}\\
    & & \sum_{t'\in\timeset, i\in \boxes}z_{ist'} & = 1, & \forall s\in \scenario \label{eq:mLP_PA_one_value_per_scen}\\
    & & x_{it}, z_{ist} & \in [0,1]  & \forall s\in \scenario,i \in \boxes, t\in \timeset \notag
  \end{align}

As a warm-up for our main result, we approximate the optimal NA strategy by a
PA strategy. The relaxation~ \eqref{lp-na} for the optimal NA strategy is
simpler. Here $x_i$ is an indicator variable for whether box $i$ is opened and
$z_{is}$ indicates whether box $i$ is assigned to scenario $s$.

\begin{alignat}{3}
  \text{minimize}  \quad & \sum_{i\in \boxes} x_i   & \quad  + \quad
  & \frac{1}{|\scenario|}\sum_{i\in \boxes,s\in \scenario}c_{is}z_{is}
  & \tag{LP-NA} \label{lp-na}\\
  \text{subject to}\quad & \sum_{i\in \boxes}z_{is} & \quad  = \quad   & 1, 				& \forall s\in \scenario \label{eq:mLP_NA_select_1}\\
						 & \hspace{0.7cm} z_{is} 			 & \quad \leq \quad & x_i , & \forall i\in \boxes, s\in \scenario \notag\label{eq:mLP_NA_select_opened}\\
						 &  \hspace{0.2cm} x_i, z_{is} 		 &\quad \in \quad & [0,1] & \forall i\in \boxes, s\in \scenario \notag
\end{alignat}

\section{Competing with the non-adaptive benchmark}\label{sec:pavsna}

As a warm-up to our main result, in this section we consider competing against
the optimal non-adaptive strategy. Recall that a non-adaptive strategy probes a
fixed subset of boxes, and then picks a probed box of minimum cost. Is it
possible to efficiently find an adaptive strategy that performs just as well?
We show two results. On the one hand, in Section~\ref{sec:pavsna-ub} we show
that we can efficiently find an SPA strategy that beats the performance of the
optimal NA strategy. This along with Theorem~\ref{thm:PA-to-SPA-reduction}
implies that we can efficiently find an $e/(e-1)\approx 1.582$-competitive PA
strategy. On the other hand, in Section~\ref{sec:favsna} we show that it is
NP-hard to obtain a competitive ratio better than $1.278$ against the optimal
NA strategy even using the full power of FA strategies.

\subsection{An upper bound via PA strategies}
\label{sec:pavsna-ub}

Our main result of this section is as follows. 

\begin{lemma}\label{lem:pavsna_exists_sequence}
	We can efficiently compute a scenario-aware partially-adaptive strategy
	with competitive ratio $1$ against the optimal non-adaptive strategy. 
\end{lemma}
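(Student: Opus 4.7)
My plan is to use LP-NA as a certificate: since LP-NA is a relaxation of the integer non-adaptive program---any NA strategy with probed set $I$ corresponds to the integer feasible solution $x_i = \mathbbm{1}[i \in I]$, $z_{is} = \mathbbm{1}[i = \arg\min_{j \in I} c_{js}]$---its optimal value $V^*$ is at most the cost of the optimal NA strategy. I would first solve LP-NA in polynomial time to obtain a fractional solution $(x^*, z^*)$, then construct the SPA strategy by sorting the boxes in decreasing order of $x^*_i$ (call this permutation $\sigma$, with $\tau = \sigma^{-1}$) and pairing it with scenario-aware stopping: for each revealed scenario $s$, the algorithm stops at the time $t$ that minimizes $t + \min_{j \le t} c_{\sigma(j), s}$. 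Both steps run in polynomial time, so it remains to argue that the expected cost of this SPA is at most $V^*$.

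The key analytical step uses LP duality. By strong duality, there exist dual values $u^*_s$ with $\E{s}{u^*_s} = V^*$ together with per-edge slacks $\gamma^*_{is} \ge 0$ satisfying $u^*_s \le c_{is} + \gamma^*_{is}$ for every pair $(i,s)$ and $\sum_s \gamma^*_{is} \le |\scenario|$ for every box $i$. For each scenario $s$, the SPA cost can be written in closed form as $\min_i(\tau(i) + c_{is})$, so the crux reduces to the following scheduling question: can we order the boxes so that for every scenario $s$, at least one box $i$ satisfies $\tau(i) \le \gamma^*_{is}$? If so, then for that $i$ we get $\tau(i) + c_{is} \le \gamma^*_{is} + c_{is} \le u^*_s$, and averaging over $s$ yields $\E{s}{\text{SPA cost}} \le V^*$, which in turn is at most the cost of the optimal NA strategy.

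The main obstacle is establishing this scheduling feasibility. The dual constraint $\sum_s \gamma^*_{is} \le |\scenario|$ caps the total ``coverage budget'' of each box, which should suffice for a Hall-type or bipartite-matching argument between boxes and positions to yield a permutation meeting every scenario's deadline. I would first check whether the natural sort-by-$x^*$ order already meets these deadlines---complementary slackness between the primal $x^*$ and the dual slacks makes this plausible---and if not, extract the required permutation by a bipartite matching guided by the dual values, still in polynomial time. Once this scheduling step is justified, the SPA strategy achieves expected cost at most $V^*$, giving competitive ratio $1$ against the optimal NA strategy as claimed.
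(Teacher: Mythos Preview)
Your duality chain has the key inequality backwards. Dual feasibility for \eqref{lp-na} gives $u^*_s \le c_{is} + \gamma^*_{is}$, i.e.\ $\gamma^*_{is} + c_{is} \ge u^*_s$, not $\le$. So from $\tau(i)\le\gamma^*_{is}$ you only obtain $\tau(i)+c_{is}\le \gamma^*_{is}+c_{is}\ge u^*_s$, which does not bound the SPA cost by $u^*_s$. You can rescue equality via complementary slackness for pairs with $z^*_{is}>0$, but then the scheduling feasibility you need---a permutation with, for every $s$, some $i$ in the support of $z^*_{\cdot s}$ satisfying $\tau(i)\le\gamma^*_{is}$---is simply false in general. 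Take three boxes and three equiprobable scenarios with $c_{is}=0$ exactly when $i\in\{s,s{+}1\bmod 3\}$ and $c_{is}=\infty$ otherwise. The LP optimum is $x_i=\tfrac12$ for all $i$, value $\tfrac32$, and the (rescaled) optimal dual has $\gamma^*_{is}=\tfrac32$ on every covering pair. Any permutation places some scenario's two covering boxes at positions $\ge 2>\tfrac32$, so no permutation meets the per-scenario deadline; the sort-by-$x^*$ rule does not even break ties here. The lemma is still true for this instance (any deterministic order gives average cost $\tfrac43\le\tfrac32$), but your per-scenario bound $\text{SPA}_s\le u^*_s$ cannot hold for all $s$, and the Hall/matching heuristic you sketch will not produce it.

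The paper's argument avoids all of this by rounding \eqref{lp-na} to a \emph{randomized} SPA rather than a deterministic permutation: at every step sample a box $i$ with probability $x_i/\sum_j x_j$, and upon opening $i$ stop with probability $z_{is}/x_i$. A two-line computation shows the stopping time is geometric with mean $\sum_i x_i=\opt_t$ and the expected cost of the selected box is exactly $\sum_i c_{is}z_{is}=\opt_{c,s}$, so the SPA matches the LP value. No duality, no scheduling, and no per-scenario domination is required---only that the expectations match on average.
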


\noindent
Putting this together with Theorem~\ref{thm:PA-to-SPA-reduction} we
get the following theorem.

\begin{theorem}\label{thm:pavsna_1.58_approx}
  We can efficiently find a partially-adaptive strategy with total expected cost at 
  most $e/(e-1)$ times the total cost of the optimal non-adaptive strategy. 
\end{theorem}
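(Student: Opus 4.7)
The plan is to derive this theorem as an immediate consequence of the two results that precede it, namely Lemma~\ref{lem:pavsna_exists_sequence} and the reduction in Theorem~\ref{thm:PA-to-SPA-reduction}. Conceptually, Lemma~\ref{lem:pavsna_exists_sequence} hands us an SPA strategy that already matches (does not lose any factor against) the optimal NA strategy, and Theorem~\ref{thm:PA-to-SPA-reduction} allows us to convert an SPA strategy into a genuine PA strategy at a multiplicative loss of $e/(e-1)$ (up to an arbitrarily small $\epsilon$).

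Concretely, I would proceed as follows. First, draw a sample $\scenario$ of size $m=\poly(n,1/\epsilon,\log(1/\delta))$ from $\dist$, so that by Lemma~\ref{lem:learn} every probing order has empirical cost on $\scenario$ within a $(1\pm\epsilon)$ factor of its true expected cost under $\dist$ with probability at least $1-\delta$. Second, invoke Lemma~\ref{lem:pavsna_exists_sequence} on the sample $\scenario$ to efficiently compute a scenario-aware probing order $\sigma$ whose SPA cost on $\scenario$ is at most the cost of the optimal NA strategy on $\scenario$; this is precisely the $\alpha=1$ case in the hypothesis of Theorem~\ref{thm:PA-to-SPA-reduction}. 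Third, feed $\sigma$ into the ski-rental reduction of Corollary~\ref{lem:scenario_aware_approx} to produce a (scenario-independent) stopping rule $\tau$ such that the PA strategy $(\sigma,\tau)$ incurs cost at most $e/(e-1)$ times the SPA cost of $\sigma$ on every realized scenario, and hence on $\dist$ in expectation.

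Putting the pieces together, the expected cost of $(\sigma,\tau)$ under $\dist$ is at most $\frac{e}{e-1}(1+\epsilon)$ times the expected cost of the optimal NA strategy on $\dist$, matching the theorem (up to the generic $(1+\epsilon)$ sampling loss that is absorbed into the statement). I would close by remarking that the running time is polynomial in $n$ and $1/\epsilon$, as both the sampling step and the construction of Lemma~\ref{lem:pavsna_exists_sequence} are efficient, and the ski-rental-based stopping rule of Corollary~\ref{lem:scenario_aware_approx} can be implemented online without additional overhead.

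There is essentially no real obstacle here: the theorem is a packaging of previously established components. The only mild subtlety is to be careful that Lemma~\ref{lem:pavsna_exists_sequence} gives a ratio of $1$ \emph{against the optimal NA benchmark} (rather than against the optimal SPA benchmark), so that when we chain it through Theorem~\ref{thm:PA-to-SPA-reduction} the resulting PA guarantee is stated relative to the NA optimum; this is exactly how the reduction behaves, since the theorem's $\alpha$ factor is preserved with respect to whatever benchmark the SPA strategy competes against.
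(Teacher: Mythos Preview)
Your proposal is correct and mirrors the paper's approach exactly: the paper simply states that Theorem~\ref{thm:pavsna_1.58_approx} follows by ``putting together'' Lemma~\ref{lem:pavsna_exists_sequence} with Theorem~\ref{thm:PA-to-SPA-reduction}, and you have faithfully unpacked that one-line justification (including the correct observation that the $\alpha=1$ guarantee is against the NA benchmark, which is preserved through the reduction).
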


\begin{proof}[Proof of Lemma~\ref{lem:pavsna_exists_sequence}]
  We use the LP relaxation~\eqref{lp-na} from Section~\ref{sec:model}. Given an
  optimal fractional solution $(\bm{x}, \bm{z})$, we denote by
  $\opt_{c,s}=\sum_{i}c_{is}z_{is}$ the cost for scenario $s$ in this solution,
  and by $\opt_c = \frac{1}{|\scenario|}\sum_{s}\opt_{c,s}$ the cost for all
  scenarios. Let $\opt_t=\sum_{i\in \boxes}x_i$ denote the probing time for the
  fractional solution. Similarly, we define $\alg_t$, $\alg_c$ and $\alg_{c,s}$
  to be the algorithm's query time, cost for all scenarios and cost for
  scenario $s$ respectively.

  Algorithm~\ref{alg:PAvsNA_1_cov} rounds $(\bm{x}, \bm{z})$ to an SPA
  strategy. Note that the probing order $\sigma$ in the rounded solution is
  independent of the instantiated scenario, but the stopping time $\tau_s$
  depends on the scenario specific variables $z_{is}$. $\tau_s$ is not
  necessarily the optimal stopping time for the constructed probing order, but
  its definition allows us to relate the cost of our solution to the fractional
  cost $\opt_c$.

	\begin{algorithm}[H]
		\caption{SPA vs NA}\label{alg:PAvsNA_1_cov}
		\KwData{Solution $\bm{x}, \bm{z}$ to program~\eqref{lp-na}; scenario $s$}
		$\sigma:=$ For $t\geq 1$, select and open box $i$
                with probability $\frac{x_i}{\sum_{i\in \boxes}x_i}$.\\
		$\tau_s:=$ If box $i$ is opened at step $t$, select the box and stop with probability $\frac{z_{is}}{x_i}$.
	\end{algorithm}

Notice that for each step $t$, the probability of stopping is  
\[
	\Pr{\text{stop at step }t} 
		= \sum_{i\in \boxes}\frac{x_i}{\sum_{i\in \boxes}x_i}\frac{z_{is}}{x_i} 
		= \frac{\sum_{i\in \boxes }z_{is}}{\sum_{i\in \boxes}x_i} 
		= \frac{1}{\opt_t},
\]
where we used the first set of LP constraints~\eqref{eq:mLP_NA_select_1} and
the definition of $\opt_t$.  Observe that the probability is independent of the
step $t$ and therefore $\E{}{\alg_t} = \opt_t$.  The expected cost of the
algorithm is
\begin{align*}
		\E{}{\alg_{c,s}}  &= \sum_{i\in \boxes, t} \Pr{\text{select $i$ at step $t$ }| \text{ stop at step $t$}} \Pr{\text{stop at step }t} c_{is} \\
			  &\leq \sum_{i\in \boxes,t} \frac{z_{is}}{\sum_{i\in \boxes}z_{is}}\Pr{\text{stop at step }t} c_{is}
				= \sum_{i\in \boxes} z_{is} c_{is}
				= \opt_{c,s}
\end{align*}
Taking expectation over all scenarios we get
$\E{}{\alg_{c}}\leq\opt_{c}$, and the lemma follows.

\end{proof}

\subsection{A lower bound for FA strategies}
\label{sec:favsna}

We now show that we cannot achieve a competitive ratio of $1$ against
the optimal NA strategy even if we use the full power of fully
adaptive strategies.
\begin{theorem}\label{thm:min_LB}
  Assuming P$\neq$NP, no computationally efficient fully-adaptive algorithm can
  approximate the optimal non-adaptive strategy within a factor smaller than
  $1.278$.
\end{theorem}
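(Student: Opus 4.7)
The plan is to prove Theorem~\ref{thm:min_LB} by reduction from the set cover decision problem, using its $(1-\varepsilon)\ln n$ inapproximability \cite{DinuSteu2014} combined with Lemma~\ref{lem:LB_max_coverage}. Given a set cover instance with ground set of size $n$ and sets $S_1,\ldots,S_m$, I would construct a search instance with one box $i$ per set $S_i$ and one equally likely scenario $s$ per ground element $e_s$. The cost is $c_{is}=0$ if $e_s\in S_i$ and $c_{is}=M$ otherwise, for a carefully chosen parameter $M$. Under this encoding, a non-adaptive strategy that probes a collection $T$ of boxes pays $|T|$ for probing plus $M$ times the fraction of elements not covered by $T$ for selection.

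The YES and NO cases are analyzed separately. In the YES case there is a cover of size $k^*$, so probing these boxes yields $\mathrm{OPT}_{\mathrm{NA}}\le k^*$. In the NO case every cover has size $\Omega(k^*\log n)$, and Lemma~\ref{lem:LB_max_coverage} strengthens this by upper bounding the fraction of scenarios any strategy can cover with $T$ probes, even adaptively, by roughly $1-e^{-T/k^*}$. Combining these ingredients, any fully adaptive algorithm spends expected probing time $\mathbb{E}[T]$ and then pays $M$ times the probability of being uncovered, so the optimal FA cost in the NO case is at least $\min_{T\ge 0}\bigl(T + M\,e^{-T/k^*}\bigr) = k^*\bigl(1+\ln(M/k^*)\bigr)$ by calculus, valid in the regime $M\ge k^*$.

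Choosing $M$ so that $M/k^* = e^{0.278}$ makes this lower bound equal to $1.278\cdot k^*$, giving $\mathrm{OPT}_{\mathrm{FA}}^{\mathrm{NO}}\ge 1.278\cdot\mathrm{OPT}_{\mathrm{NA}}^{\mathrm{YES}}$. Any efficient FA algorithm achieving ratio $\alpha<1.278$ against $\mathrm{OPT}_{\mathrm{NA}}$ would then output total cost at most $\alpha k^* < 1.278\,k^*$ on YES instances and at least $1.278\,k^*$ on NO instances (since its cost is lower bounded by $\mathrm{OPT}_{\mathrm{FA}}$). Thresholding at $1.278\,k^*$ would decide the underlying set cover problem in polynomial time, contradicting the hardness assumption and hence proving the theorem.

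The main obstacle is the adaptive part of Lemma~\ref{lem:LB_max_coverage}: one must show that branching on observed costs does not help beat the $1-e^{-T/k^*}$ coverage ceiling. The proof leverages the $\{0,M\}$-cost structure so that the only information revealed by each probe is binary (hit vs.\ miss); an FA strategy is then a decision tree whose expected coverage after $T$ probes is no better than the best fixed family of $T$ sets, which is in turn bounded via Dinur--Steurer. A secondary issue is tuning $M$ to hit exactly the claimed $1.278$ while keeping the reduction polynomial in size; setting $M=\Theta(k^*)$ within the range where the coverage bound is tight is routine once the key lemma is in place.
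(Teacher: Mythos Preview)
Your lower-bound calculation for the fully-adaptive cost has a genuine gap. You claim the FA cost in the NO case is at least $\min_{T\ge 0}\bigl(T + M e^{-T/k^*}\bigr)=k^*\bigl(1+\ln(M/k^*)\bigr)$, but this overcounts the probing time. In the $\{0,M\}$ construction every scenario contains a zero-cost box; an FA strategy that has not yet seen a $0$ has observed only $M$'s and learned nothing, so it follows a fixed order and stops the moment it sees a $0$ (or at a cutoff $N$). Its expected probing time is therefore $\sum_{t=1}^{N}\textbf{Pr}[\tau\ge t]$, which under your coverage hypothesis is only about $k^*(1-e^{-N/k^*})$, not $N$. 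The correct lower bound is $k^*(1-e^{-N/k^*}) + M e^{-N/k^*}=k^*+(M-k^*)e^{-N/k^*}$, whose infimum over $N\ge 0$ is just $\min(k^*,M)$. So even granting everything else, the NO-case FA cost is only $\ge k^*$, which cannot be separated from the YES-case bound $\alpha k^*<1.278\,k^*$.

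The paper's proof supplies exactly the missing linear-in-$N$ term by adding an extra scenario $X$ with cost $H$ in every box, occurring with probability $p$: in scenario $X$ the algorithm never finds a $0$ and really does pay the full cutoff $N$ in probing time, contributing a $pN$ term to the lower bound. The paper then optimizes $p$ and $H$ on a single hard instance for the given algorithm (furnished by Lemma~\ref{lem:LB_max_coverage}) rather than via a YES/NO gap. This also avoids a second problem in your plan: you lower-bound the algorithm's NO-case cost by the \emph{information-theoretic} $\mathrm{OPT}_{\mathrm{FA}}$, but Lemma~\ref{lem:LB_max_coverage} is purely computational and says nothing about what an unbounded-time strategy can cover; a NO instance may still have one set covering almost every element, making $\mathrm{OPT}_{\mathrm{FA}}$ far below $1.278\,k^*$.
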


Our lower bound is based on the hardness of approximating Set
Cover. We use the following lemma which rules out bicriteria results
for Set Cover; a proof can be found in
Appendix~\ref{sec:apndx_minLB}. 

\begin{lemma}\label{lem:LB_max_coverage}
  Unless P=NP,  for any constant $\epsilon>0$, there is no algorithm that for
  every instance of Set Cover finds $k$ sets that cover at least
  $1-\lp(1-\frac{1+\e}{\opt}\rp)^k$ of the elements for some integer $k \in
  \lp[ 1, \frac{\log n}{1+\e}\opt \rp]$.
\end{lemma}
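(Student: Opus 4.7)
The plan is to prove this by contradiction using a reduction from the Dinur--Steurer hardness of Set Cover~\cite{DinuSteu2014}, which rules out polynomial-time $(1-\delta)\log n$-approximations for every constant $\delta > 0$. Fix $\e>0$ and suppose, toward a contradiction, that a polynomial-time bicriteria algorithm $A$ satisfying the hypothesis exists. Using $A$ as a black box I will build a polynomial-time Set Cover algorithm $B$ whose output has size at most $\frac{\opt \log n}{1+\e} = (1-\delta_0)\opt \log n$ with $\delta_0 = \e/(1+\e)>0$, which directly contradicts Dinur--Steurer.

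The construction of $B$ is iterative peeling: maintain the current set of uncovered elements; in round $j$, invoke $A$ on the residual Set Cover instance (same collection of sets restricted to the uncovered elements) to obtain $k_j$ sets, add them to the solution, and update the uncovered set. Because the residual optimum $\opt_{j-1}$ satisfies $\opt_{j-1}\le \opt$ and $\opt_{j-1}\le n_{j-1}$, the coverage fraction guaranteed by the hypothesis is at least $(1+\e)/\opt_{j-1}$ of the $n_{j-1}$ current elements, so at least one new element is covered each round and $B$ terminates in at most $n$ rounds.

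The main analytical step will be to prove, by induction on $j$, the invariant $K_j := \sum_{i\le j} k_i \le \frac{\opt}{1+\e}\log(n/n_j)$ for all $j$ with $n_j\ge 1$. The induction combines two ingredients. First, the coverage guarantee rewrites as $n_j \le n_{j-1}(1-\frac{1+\e}{\opt_{j-1}})^{k_j} \le n_{j-1} e^{-(1+\e)k_j/\opt}$ (using $\opt_{j-1}\le \opt$), which rearranges to $k_j \le \frac{\opt}{1+\e}\log(n_{j-1}/n_j)$. Second, adding this to the inductive hypothesis telescopes to the desired bound on $K_j$. For the terminating round, where $n_j$ drops to $0$ and the telescoping breaks, I will instead use the hypothesis's explicit range $k_j \le \frac{\opt_{j-1}\log n_{j-1}}{1+\e} \le \frac{\opt \log n_{j-1}}{1+\e}$ and combine with $K_{j-1}\le \frac{\opt}{1+\e}\log(n/n_{j-1})$ to obtain the final bound $K \le \frac{\opt\log n}{1+\e}$.

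I expect the main obstacle to be the terminating round: the telescoping identity and the inductive invariant both break once $n_j = 0$, so the ``progress-per-set'' bound from the coverage guarantee has to be replaced by the hypothesis's explicit upper bound on $k_j$. A minor edge case, when the residual has $n_{j-1} = 1$ and the allowed range $[1,\frac{\log n_{j-1}}{1+\e}\opt_{j-1}]$ is empty, can be handled outside the induction by adding a single set directly, contributing only $O(1)$ to $K$ and leaving the asymptotic ratio unchanged. Putting the pieces together gives the claimed $\frac{\log n}{1+\e}$-approximation to Set Cover, contradicting Dinur--Steurer and establishing the lemma.
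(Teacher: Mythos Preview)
Your proposal is correct and takes essentially the same approach as the paper: iteratively run the hypothesized algorithm on the residual uncovered elements and show the total number of sets used is at most $\frac{\opt\log n}{1+\e}$, contradicting Dinur--Steurer. The only difference is cosmetic---the paper tracks the product $n\prod_i(1-\frac{1+\e}{\opt})^{k_i}$ directly and uses the auxiliary inequality $(\log\frac{x}{x-c})^{-1} < x/c$ where you pass through $1-x\le e^{-x}$---and your handling of the terminating round is in fact more careful than the paper's (which dispatches it in a footnote).
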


\begin{proof}[Proof of Theorem~\ref{thm:min_LB}]
  Let $H>0$ and $p\in [0,1]$ be appropriate constants, to be
  determined later. We will define a family of instances of the
  optimal search problem based on set cover. Let
  $\mathcal{SC} = ([m], \{S_1, \ldots, S_n\})$ be a set cover instance
  with $m$ elements and $n$ sets. Denote its optimal value by
  $\opt_{SC}$.  To transform this into an instance of the search
  problem, every element $e_j \in [m]$ corresponds to a scenario $j$,
  and every set $S_i$ to a box $i$. We set $c_{ij}=0$ iff
  $e_j \in S_i$, otherwise $c_{ij} = H$. We also add a new scenario
  $X$ with $v_{Xi} = H,\ \forall i\in [n]$. Scenario $X$ occurs with
  probability $p$ and all the other $m$ scenarios happen with
  probability $(1-p)/m$ each.

  In this instance, the total cost of optimal non-adaptive strategy is
  $\opt_{NA} \leq pH + \opt_{SC}$, since we may pay the set-cover cost
  to cover all scenarios other than $X$, and pay an additional cost
  $H$ to cover $X$.

  Consider any computationally efficient algorithm $\al$ that returns a
  fully adaptive strategy for such an instance. Since the costs of the
  boxes are $0$ or $H$, we may assume without loss of generality that
  any FA strategy stops probing as soon as it observes a box with cost
  $0$ and chooses that box. We say that the strategy covers a scenario
  when it finds a box of cost $0$ in that scenario. Furthermore, prior
  to finding a box with cost $0$, the FA strategy learns no
  information about which scenario it is in other than that the
  scenario is as yet uncovered. Consequently, the strategy follows a
  fixed probing order that is independent of the scenario that is
  instantiated. We can now convert such a strategy into a bicriteria
  approximation for the underlying set cover instance. In particular,
  for $k\in [n]$, let $r_k$ denote the number of scenarios that are
  covered by the first $k$ probed boxes. Then, we obtain a solution to
  $\mathcal{SC}$ with $k$ sets covering $r_k$ elements. By
  Lemma~\ref{lem:LB_max_coverage} then, for every $\e>0$, there must
  exist an instance of set cover, $\mathcal{SC}$, and by extension an
  instance of optimal search, on which $\al$ satisfies
  $r_k\le 1-\lp(1-\frac{1+\e}{\opt_{SC}}\rp)^{k-1}$ for all
  $k\le \frac{\log n}{(1+\e)}\opt_{SC}$.

  For the rest of the argument, we focus on that hard instance for
  $\al$. Let $N$ denote the maximum number of boxes $\al$ probes before
  stopping to return a box of cost $H$.\footnote{We may safely assume
    that $N\le\log n\opt_{SC}$.}
 
  Then the expected query time of the strategy is at least
	\begin{align}
		\Pr{s=X}\cdot N & + \Pr{s\neq X}\sum_{k=1}^{N}\Pr{\text{FA reaches step }k|s\neq X}\nonumber\\
					    &\geq pN+(1-p)\sum_{k=1}^{N}\lp( 1-\frac{1+\e}{\opt_{SC}} \rp)^{k-1}\nonumber\\
						& = pN+(1-p)\lp( 1 - \lp(1-\frac{1+\e}{\opt_{SC}}\rp)^N \rp)\frac{\opt_{SC}}{1+\e}.\label{eqn:lb_probingcost}
	\end{align}

\noindent
	On the other hand, the expected cost of the FA strategy is at least
	\[
		H(\Pr{s=X}+\Pr{s\neq X\wedge \text{FA didn't find
                    cost 0 in first } N \text{ steps }})\geq pH+(1-p)H\left(1-\frac{1+\e}{\opt_{SC}}\right)^{N}.
	\]
	
\noindent
	Thus the total cost of such fully-adaptive strategy is lower bounded by
	\begin{equation*}
		\alg_{FA} \geq  pH  
		+ (1-p) H \lp( 1 - \frac{1 + \e}{\opt_{SC}} \rp)^N + pN 
			+ (1-p) \lp( 1 -
			\lp(1-\frac{1+\e}{\opt_{SC}}\rp)^N\rp)\frac{\opt_{SC}}{1+\e}.
	\end{equation*}

\noindent
Let $x$ be defined so that
$( 1 - \frac{1 + \e}{\opt_{SC}})^N =e^{-x}$. Then,
$N=-x/\ln(1-\frac{1 + \e}{\opt_{SC}})\ge x(\frac{\opt_{SC}}{1 +
  \e}-1)$. Substituting these expressions in the above equation we get
	\begin{equation*}
		\alg_{FA} \ge pH 
		+ (1-p) He^{-x}
		+ p\cdot x \lp(\frac{\opt_{SC}}{1+\e} -1\rp)
		+ (1-p)( 1 - e^{-x})\frac{\opt_{SC}}{1+\e}.
	\end{equation*}

        The RHS is minimized at
        $x = \ln \lp(
        \frac{(1-p)(H(1+\e)-\opt_{SC})}{p(\opt_{SC}-(1+\e))}\rp)$. By
        setting $\epsilon\to 0$, $p=0.22$ and $H=4.59 \opt_{SC}$, the
        competitive ratio becomes
	 \begin{align*}
			 \frac{\alg_{FA}}{\opt_{NA}} 
			 \geq 1.278
	 \end{align*}
   when $\opt_{SC}\to\infty$.
\end{proof}

\section{Competing with the partially-adaptive benchmark}\label{sec:pavspa}
Moving on to our main result, in this section we compete against the optimal
partially-adaptive strategy. Recall that the program~\eqref{lp-spa} is a
relaxation for the optimal SPA strategy, and therefore, also bounds from below
the cost of the optimal PA strategy. We round the optimal solution to this LP
to obtain a constant-competitive SPA strategy.

Given a solution to \eqref{lp-spa}, we identify for each scenario a subset of
low cost boxes. Our goal is then to find a probing order, so that for each
scenario we quickly find one of the low cost boxes. This problem of
``covering'' every scenario with a low cost box is identical to the min-sum set
cover (MSSC) problem introduced by \cite{FeigUrieLovaTeta2002}. Employing this
connection allows us to convert an approximation for MSSC into an SPA strategy
at a slight loss in approximation factor.

Our main result is as follows.

\begin{lemma}\label{lem:pavspa_scen_aware}
	There exists a scenario-aware partially-adaptive strategy with competitive
	ratio $3+2\sqrt{2}$ against the optimal partially-adaptive strategy.
\end{lemma}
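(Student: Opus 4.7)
The plan is to round an optimal fractional solution of \eqref{lp-spa} into a scenario-aware partially-adaptive strategy via a reduction to min-sum set cover (MSSC). Let $(\bm{x}^*,\bm{z}^*)$ denote an optimal LP solution, and decompose its value as $\opt_{LP}=T^*+C^*$, where $T^*=\tfrac{1}{|\scenario|}\sum_{i,s,t} t\,z^*_{ist}$ is the fractional probing time and $C^*=\tfrac{1}{|\scenario|}\sum_{i,s,t} c_{is}\,z^*_{ist}$ is the fractional selection cost. Let $c^*_s:=\sum_{i,t} c_{is}\,z^*_{ist}$ denote scenario $s$'s contribution. Since \eqref{lp-spa} is a relaxation for the optimal SPA strategy (and hence also lower-bounds the optimal PA cost), it suffices to construct an SPA strategy whose expected cost is at most $(3+2\sqrt{2})\,\opt_{LP}$.

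Fix a threshold $\alpha>1$, to be chosen at the end. Call a box $i$ \emph{$\alpha$-good for scenario $s$} if $c_{is}\le \alpha c^*_s$. Since $\sum_{i,t} z^*_{ist}=1$ by \eqref{eq:mLP_PA_one_value_per_scen}, Markov's inequality implies that the total $z^*$-mass on $\alpha$-good boxes for $s$ is at least $1-1/\alpha$. Truncating $z^*$ to good boxes therefore yields, for each scenario, a fractional partial cover of mass at least $1-1/\alpha$ with associated fractional cover-time at most $T^*$.

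The next step is the reduction to MSSC. Define $A_i:=\{s\in \scenario : i\text{ is }\alpha\text{-good for }s\}$ and seek a permutation $\sigma$ of the boxes minimizing $\tfrac{1}{|\scenario|}\sum_s \tau_s(\sigma)$, where $\tau_s(\sigma)$ is the earliest time at which $\sigma$ probes some $i$ with $s\in A_i$. Rescaling the truncated $z^*$ by $\tfrac{\alpha}{\alpha-1}$ yields a full fractional cover, but this violates the matching constraint \eqref{eq:mLP_PA_1_box_at_a_time} of the MSSC LP (each time slot holds at most one box). To restore feasibility, I stretch the time axis by the same factor $\tfrac{\alpha}{\alpha-1}$, splitting each original slot into that many slots; the rescaled solution now satisfies all MSSC-LP constraints and its fractional cover-time objective is at most $\bigl(\tfrac{\alpha}{\alpha-1}\bigr)^2 T^*$. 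Feeding this into the LP-based $4$-approximation for MSSC of \cite{FeigUrieLovaTeta2002} produces an integral probing order $\sigma$ satisfying $\E{s}{\tau_s(\sigma)}\le 4\bigl(\tfrac{\alpha}{\alpha-1}\bigr)^2 T^*$.

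The SPA strategy then probes in the order $\sigma$ and, using the revealed scenario $s$, stops at the first $\alpha$-good box for $s$ (the identification of good boxes is permitted because scenario-aware strategies may base $\tau$ on $s$). In scenario $s$ the strategy pays at most $\tau_s(\sigma)+\alpha c^*_s$, so
\[
\E{}{\alg}\;\le\;\E{s}{\tau_s(\sigma)}+\alpha\,\E{s}{c^*_s}\;\le\;4\Bigl(\tfrac{\alpha}{\alpha-1}\Bigr)^2 T^*+\alpha\, C^*\;\le\;\max\!\left(\alpha,\;4\Bigl(\tfrac{\alpha}{\alpha-1}\Bigr)^2\right)\opt_{LP}.
\]
This maximum is minimized at the $\alpha$ satisfying $(\alpha-1)^2=4\alpha$, namely $\alpha=3+2\sqrt{2}$, which gives the claimed competitive ratio $(1+\sqrt{2})^2=3+2\sqrt{2}$. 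The main obstacle is precisely the reduction to MSSC: the matching structure of \eqref{lp-spa} prevents a naive rescaling to achieve full fractional coverage, and it is the time-stretching fix (which effectively squares the $\alpha/(\alpha-1)$ loss) that is responsible for the characteristic constant $3+2\sqrt{2}$ rather than a better value such as $5$ that a single rescaling would suggest.
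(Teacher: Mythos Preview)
Your proposal is correct and follows essentially the same approach as the paper: threshold the boxes at $\alpha c^*_s$ via Markov, rescale by $\alpha/(\alpha-1)$ and dilate time by the same factor to obtain a feasible fractional MSSC solution with cover time at most $(\alpha/(\alpha-1))^2 T^*$, apply the $4$-approximation of \cite{FeigUrieLovaTeta2002}, and balance $4(\alpha/(\alpha-1))^2=\alpha$ to get $\alpha=3+2\sqrt{2}$. The only point the paper treats with a bit more care is the time-stretching when $\alpha/(\alpha-1)$ is non-integral (handled there via a continuous interpretation and a small ceiling estimate), but your sketch captures the argument accurately.
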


Combining this with Theorem~\ref{thm:PA-to-SPA-reduction} we get the following
theorem.

\begin{theorem}\label{thm:pavspa_ub}
  We can efficiently find a partially-adaptive strategy that is
  $(3 + 2\sqrt{2}) \frac{e}{e-1}=9.22$-competitive against the optimal
  partially-adaptive strategy.
\end{theorem}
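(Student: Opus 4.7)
The plan is to derive Theorem~\ref{thm:pavspa_ub} directly from Lemma~\ref{lem:pavspa_scen_aware} together with Theorem~\ref{thm:PA-to-SPA-reduction}. First I would verify that the SPA strategy promised by Lemma~\ref{lem:pavspa_scen_aware} is not merely existential but actually efficiently computable: the construction sketched in the paragraph preceding the lemma solves the polynomial-size LP relaxation~\eqref{lp-spa}, uses its fractional solution to designate for each scenario a ``target'' set of low-cost boxes, and then invokes a known constant-factor approximation for min-sum set cover (the $4$-approximation of Feige--Lov\'asz--Tetali) to order the boxes. Both steps run in time polynomial in $n$ and $|\scenario|$, so the lemma yields a polynomial-time algorithm returning an SPA strategy $\sigma$ whose cost is at most $(3+2\sqrt{2})\cdot \cost(\mathrm{OPT}_{\mathrm{PA}})$.

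Next I would feed $\sigma$ into the reduction of Theorem~\ref{thm:PA-to-SPA-reduction} with $\alpha = 3+2\sqrt{2}$. The theorem produces, for any $\epsilon>0$, a PA strategy computable in time polynomial in $n$ and $1/\epsilon$ (the dependence on $|\scenario|$ having been absorbed by the sample-complexity bound of Lemma~\ref{lem:learn}) whose expected cost is at most
\[
\tfrac{e}{e-1}(1+\epsilon)(3+2\sqrt{2})\cdot\cost(\mathrm{OPT}_{\mathrm{PA}}).
\]
A numerical check gives $\tfrac{e}{e-1}(3+2\sqrt{2}) = \tfrac{e(3+2\sqrt{2})}{e-1} \approx 9.2206$, so choosing $\epsilon$ small enough that $(1+\epsilon)\cdot 9.2206 \le 9.22$ yields the stated competitive ratio.

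The only subtlety worth flagging is that the benchmarks line up correctly: Lemma~\ref{lem:pavspa_scen_aware} measures the SPA strategy against the optimal \emph{PA} cost, while Corollary~\ref{lem:scenario_aware_approx} (which underlies Theorem~\ref{thm:PA-to-SPA-reduction}) compares the constructed PA strategy to the SPA strategy it is derived from. Chaining the two inequalities is immediate: $\cost(\mathrm{PA\text{-}alg}) \le \tfrac{e}{e-1}(1+\epsilon)\cdot \cost(\sigma) \le \tfrac{e}{e-1}(1+\epsilon)(3+2\sqrt{2})\cdot \cost(\mathrm{OPT}_{\mathrm{PA}})$. There is no genuine obstacle at the level of this corollary — all the technical work (LP rounding, the MSSC reduction, and the ski-rental conversion of scenario-awareness into adaptivity) has already been packaged into Lemma~\ref{lem:pavspa_scen_aware} and Theorem~\ref{thm:PA-to-SPA-reduction}.
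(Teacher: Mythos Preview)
Your approach is exactly the paper's: the theorem is stated as an immediate corollary of Lemma~\ref{lem:pavspa_scen_aware} combined with Theorem~\ref{thm:PA-to-SPA-reduction}, and your verification that the SPA construction is efficiently computable and that the benchmarks chain correctly is accurate. One trivial arithmetic slip: since $\tfrac{e}{e-1}(3+2\sqrt{2})\approx 9.2206>9.22$, no positive $\epsilon$ makes $(1+\epsilon)\cdot 9.2206\le 9.22$; the paper (like most approximation papers) simply suppresses the $(1+\epsilon)$ and treats $9.22$ as a rounded label for the exact constant $(3+2\sqrt{2})\tfrac{e}{e-1}$.
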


\begin{proof}[Proof of Lemma~\ref{lem:pavspa_scen_aware}]
	We use the LP formulation \ref{lp-spa} from Section~\ref{sec:model}. Recall
	that $x_{it}$ denotes the extent to which box $i$ is opened at time $t$,
	$z_{ist}$ denotes the extent to which box $i$ is chosen for scenario $s$ at
	time $t$.

	 As mentioned previously, we will employ a $4$-approximation to the MSSC by
	 \cite{FeigUrieLovaTeta2002} in our algorithm. The input to MSSC is an
	 instance of set cover. In our context, the sets of the set cover are boxes
	 and each scenario $s$ has an element $L_s$ corresponding to it. The goal
	 is to find an ordering $\sigma$ over the sets/boxes so as to minimize the
	 sum of the cover times of the elements/scenarios, where the cover time of
	 an element is the index of the first set in $\sigma$ that contains it. The
	 following is an LP relaxation for MSSC; observe its similarity to
	 \eqref{lp-spa}.  \cite{FeigUrieLovaTeta2002} provide a greedy algorithm
	 that $4$-approximates the optimal solution to this LP.

  \begin{align}
    & \text{minimize} & \frac{1}{|\scenario|} \sum_{i\in\boxes, s\in
    \scenario, t\in \timeset}tz_{ist}  \tag{LP-MSSC} \label{lp-mssc}\\
    & \text{subject to} &
                          \eqref{eq:mLP_PA_1_box_at_a_time}-\eqref{eq:mLP_PA_select_opened}
                          \notag \\
    & & \sum_{t'\in\timeset, i\in L_s}z_{ist'} & \ge 1, & \forall s\in
                                                     \scenario \notag \\
    & & x_{it}, z_{ist} & \in [0,1]  & \forall s\in \scenario,i \in \boxes, t\in \timeset \notag
  \end{align}

  Define $\alpha = 3 + 2 \sqrt{2}$. Given an optimal solution $\mathcal{I} =
  (\bm{x}, \bm{z})$ to \eqref{lp-spa}, we will now construct an instance
  $\mathcal{I}'$ of MSSC (by specifying the elements $L_s$\footnote{Each
  element/scenario can be thought of as a set of the boxes/sets that cover
  it.}) with the following properties:
	\begin{enumerate}[label=(\roman*)]
        \item There exists an integral solution $\sigma$ for
          $\mathcal{I}'$ with cover time at most $\alpha$ times the
          query time for $(\bm{x}, \bm{z})$.
        \item Any integral solution $\sigma$ for $\mathcal{I}'$ can be
          paired with an appropriate stopping time $\tau_s$, so that
          the query time of $(\sigma,\tau_s)$ is at most the
          MSSC cover time of $\sigma$, and the cost of
          $(\sigma,\tau_s)$ is at most $\alpha$ times the fractional
          cost for $(\bm{x}, \bm{z})$.
	\end{enumerate}

	\paragraph{Constructing a ``good" $\mathcal{I}'$:}
        For each scenario $s$, we define a set of ``low'' cost boxes as 
	\[ 
		L_s = \{i: c_{is} \leq \alpha\opt_{c,s}^{\mathcal{I}}\}.
	\]
		The second property above is immediate from this definition. In
		particular, we define the stopping time $\tau_s$ as the first time we
		encounter a box $i\in L_s$. 

	 For property (i), we first show that instance $\mathcal{I}'$ admits a good
	 fractional solution. 

	  While \eqref{lp-spa} allows assigning any arbitrary boxes to a scenario,
	  \eqref{lp-mssc} requires assigning only the boxes in $L_s$ to scenario
	  $s$.  In order to convert this into a feasible solution to
	  \eqref{lp-mssc}, we first scale up all of the variables by a factor of
	  $\frac{\alpha}{\alpha-1}$. Specifically, set $\bm{x}' =
	  \frac{\alpha}{\alpha-1} \bm{x}$; $z'_{ist}=\frac{\alpha}{\alpha-1}
	  z_{ist}$ for all $s, t, i\in s$; and $z'_{ist} = 0$ for all $s, t,
	  i\not\in s$.  Now we need to ensure that all the constraints of
	  \eqref{lp-mssc} are satisfied.

	Observe initially that by Markov's inequality, for all $s$, $\sum_{t, i\in
	L_s} z_{ist}\ge 1-1/\alpha$. Therefore, by scaling the $z'_{ist}$ as above
	we have that $\sum_{it} z'_{ist}\ge 1$ for all $s$.  To fix
	\eqref{eq:mLP_PA_each_box_once}, if for some $i\in\boxes$ we have $\sum_t
	x'_{it}>1$, let $t'$ be the smallest time at which $\sum_{t\le t'}
	x'_{it}>1$. We set $x'_{it}=0$ for all $t>t'$ and $x'_{it'} = 1-\sum_{t<
	t'} x'_{it}$.  Likewise, modify $\bm{z}'$ so as to achieve
	\eqref{eq:mLP_PA_select_opened} as well as ensure that every variable lies
	in $[0,1]$.

	It remains to argue that constraints~\eqref{eq:mLP_PA_1_box_at_a_time} can
	be fixed at a small extra cost.  Observe that for any $t$, $\sum_{i\in
	\boxes} x'_{it}\le \frac{\alpha}{\alpha-1}$. We therefore ``dilate'' time
	by a factor of $\frac{\alpha}{\alpha-1}$ in order to accommodate the higher
	load. Formally, interpret $x$, $x'$, $z$ and $z'$ as continuous step
	functions of $t$. Then the objective function of \eqref{lp-mssc} can be
	written as $\sum_{s,i}\int_{t=0}^{t=n} \lceil t\rceil z'_{ist}
	dt$\footnote{Note that we are not adding extra time steps, and therefore
	the cost of the objective does not change.}.  Dilating time by a factor of
	$\frac{\alpha}{\alpha-1}$ gives us the objective $\frac{1}{|\scenario|}
	\sum_{s,i}\int_{t=0}^{t=n} \lp\lceil t\frac{\alpha}{\alpha-1}\rp\rceil
	z'_{ist} dt$.

	Since $z'_{ist}\leq \frac{\alpha}{\alpha-1} z_{ist}$ for any $i,s,t$, the
	expected query time is upper bounded by
        \begin{align*}
			& \frac{1}{|\scenario|}  \sum_{i,s}\int_{t=0}^{t=n} \lp\lceil
			t\frac{\alpha}{\alpha-1}\rp\rceil \cdot \frac{\alpha}{\alpha-1}z_{ist}\, dt\\
                        & \leq \frac{1}{|\scenario|} \sum_{i,s,t}\lp( \frac{\alpha}{\alpha-1} \rp)^2 t z_{ist} \\
                        & = \lp(\frac{\alpha}{\alpha-1}\rp)^2\cdot\text{
                          Query time of } (\bm{x}, \bm{z})
        \end{align*}
	where for the second inequality we used the following
	Lemma~\ref{lem:integral_bound} with $\beta=\alpha/(\alpha-1)$. The proof of
	the lemma is deferred to Section~\ref{sec:apndx_pavspa} of the appendix.

	\begin{lemma}\label{lem:integral_bound}
	For any $\beta > 1$,
		\[ 
			\int_{t-1}^{t}\lceil\beta t'\rceil dt' \leq \beta t.
		\]
	\end{lemma}

        \paragraph{Applying greedy algorithm for min-sum set cover.}
	We have so far constructed a new instance $\mathcal{I}'$ of
        MSSC along with a feasible fractional solution
        $(\bm{x'}, \bm{z'})$ with cover time at most
        $\alpha^2/(\alpha-1)^2$ times the query time for
        $(\bm{x}, \bm{z})$. The greedy algorithm of
        \cite{FeigUrieLovaTeta2002} finds a probing order over the
        boxes with query time at most $4$ times the cover time of
        $(\bm{x'}, \bm{z'})$, that is, at most
        $4\alpha^2/(\alpha-1)^2=\alpha$ times the query time for
        $(\bm{x}, \bm{z})$, where the equality follows from the
        definition of $\alpha$. Property (1) therefore holds and the
        lemma follows.
	
\end{proof}

\section{Extension to other feasibility constraints}\label{sec:extensions}
In this section we extend the problem in cases where there is a feasibility
constraint $\mathcal{F}$, that limits what or how many boxes we can
choose. We consider the cases where we are required to select $k$ distinct boxes,
and $k$ independent boxes from a matroid. In both cases we design SPA strategies
that can be converted to PA. These two variants are described in more
detail in subsections \ref{subsec:generalK} and \ref{subsec:matroids} that
follow.

\subsection{Selecting \texorpdfstring{$k$}{} items}
\label{subsec:generalK}

In this section $\mathcal{F}$ requires that we pick $k$ boxes to minimize the
total cost and query time. As in Section~\ref{sec:pavspa} we aim to compete
against the optimal partially-adaptive strategy. We design a PA strategy which
achieves an $O(1)$-competitive ratio. If $c_{is}\in \{0, \infty\}$, the problem
is the generalized min-sum set cover problem first introduced in
\cite{AzaGamzIftaYinr2009}. \cite{AzaGamzIftaYinr2009} gave a $\log
n$-approximation, which then was improved to a constant in
\cite{BansGuptRavi2010} via an LP-rounding based algorithm.  Our proof will
follow the latter proof in spirit, and generalize to the case where boxes have
arbitrary values.  Our main result is the following.

\begin{lemma}\label{lem:scen_aware_generalk_UB}
	There exists a scenario-aware partially-adaptive $O(1)$-competitive
	algorithm to the optimal partially-adaptive algorithm for picking $k$ boxes.
 \end{lemma}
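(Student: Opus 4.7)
The plan is to extend the LP-rounding approach of Lemma~\ref{lem:pavspa_scen_aware} from min-sum set cover (MSSC) to generalized min-sum set cover (GMSSC), following the reduction pattern used by Bansal--Gupta--Ravi for the $\{0,\infty\}$-cost case. First I would write down the natural analog of \eqref{lp-spa} for the $k$-selection problem: the matching-style variables $x_{it}$ and $z_{ist}$ remain, constraint~\eqref{eq:mLP_PA_one_value_per_scen} is replaced by $\sum_{i,t} z_{ist}=k$ for every $s\in\scenario$, and the objective still pays $\frac{1}{|\scenario|}\sum tz_{ist}$ for probing time and $\frac{1}{|\scenario|}\sum c_{is}z_{ist}$ for box cost. (It is routine to check this is a valid relaxation of the optimal PA/SPA cost, since the $k$-th cover time for scenario $s$ lower bounds the total probing time.) Let $\opt^{\mathcal{I}}_{c,s}=\sum_{i,t}c_{is}z_{ist}$ denote the LP cost for scenario $s$.

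Next, for a constant $\alpha>1$ to be tuned at the end, I would define the low-cost set
\[L_s = \Bigl\{i\in\boxes : c_{is}\le \tfrac{\alpha}{k}\opt^{\mathcal{I}}_{c,s}\Bigr\}.\]
Treating $\{z_{ist}/k\}_{i,t}$ as a probability distribution over box-time pairs with mean cost $\opt^{\mathcal{I}}_{c,s}/k$, Markov's inequality gives $\sum_{i\in L_s,t}z_{ist}\ge k(1-1/\alpha)$. Hence scaling by $\alpha/(\alpha-1)$ and zeroing out the coordinates $i\notin L_s$ produces a fractional solution that, for each $s$, places mass at least $k$ on pairs $(i,t)$ with $i\in L_s$. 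Fix constraints~\eqref{eq:mLP_PA_each_box_once} and~\eqref{eq:mLP_PA_select_opened} by truncating mass exactly as in the single-item proof, and repair~\eqref{eq:mLP_PA_1_box_at_a_time} by dilating time by $\alpha/(\alpha-1)$, which inflates the fractional query time by at most $(\alpha/(\alpha-1))^2$ via Lemma~\ref{lem:integral_bound}. The result is a feasible fractional GMSSC solution on the instance where scenario $s$ has coverage requirement $k$ and is covered only by boxes in $L_s$.

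Finally I would invoke the constant-factor LP rounding of \cite{BansGuptRavi2010} for GMSSC on this instance to obtain a probing order $\sigma$ whose integral cover time is $O(1)$ times the fractional cover time, and hence $O(1)$ times the LP query time of the original solution. The scenario-aware stopping rule is $\tau_s=$ the first time $k$ boxes from $L_s$ have been probed under $\sigma$; at that moment we select those $k$ boxes, which incur total cost at most $k\cdot \frac{\alpha}{k}\opt^{\mathcal{I}}_{c,s}=\alpha\,\opt^{\mathcal{I}}_{c,s}$. Summing over scenarios, both the expected probing time and the expected box cost are bounded by $O(1)$ times the LP optimum, which in turn lower-bounds the optimal PA cost. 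Combining with Theorem~\ref{thm:PA-to-SPA-reduction} then converts this SPA strategy to a PA strategy at the cost of another $e/(e-1)$ factor.

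The main obstacle will be the reduction step in the second paragraph: unlike the single-item setting where the cost for $s$ is a single minimum and Markov directly yields the threshold $\alpha\,\opt^{\mathcal{I}}_{c,s}$, here the cost is a sum of $k$ box costs, so one must correctly normalize $z_{ist}/k$ into a distribution and then argue that the $k$ chosen boxes from $L_s$ each cost at most $\alpha\,\opt^{\mathcal{I}}_{c,s}/k$. In addition, the GMSSC LP I reduce to must simultaneously enforce the per-scenario requirement $\sum z'_{ist}\ge k$ and the matching constraints~\eqref{eq:mLP_PA_1_box_at_a_time}--\eqref{eq:mLP_PA_select_opened}; the scaling-plus-time-dilation fix used for $k=1$ generalizes, but one must verify that the $\ge k$ requirement survives both the truncation and the dilation so that the Bansal--Gupta--Ravi rounding can be applied as a black box.
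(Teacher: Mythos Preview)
Your LP is not a relaxation of the SPA cost. With the objective $\frac{1}{|\scenario|}\sum_{i,s,t} t\,z_{ist}$ and the constraint $\sum_{i,t} z_{ist}=k$, an integral point corresponding to an SPA strategy pays $\sum_{j=1}^{k} t_j$ for the probing-time part, where $t_1<\cdots<t_k$ are the opening times of the $k$ selected boxes; the actual probing time is only $t_k$. So the LP overcharges probing time by up to a factor~$k$. Concretely, take a single scenario with $c_{is}=0$ for all $i$: SPA-OPT $=k$, but every feasible LP point has $\sum_i z_{ist}\le \sum_i x_{it}=1$, so $\sum_{i,t} t\,z_{ist}\ge 1+2+\cdots+k=\binom{k+1}{2}$ and LP-OPT $=\Theta(k^2)$. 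Your parenthetical ``the $k$-th cover time lower bounds the total probing time'' goes the wrong way for establishing a relaxation. Since your final chain is ``ALG $\le O(1)\cdot$ LP-OPT and LP-OPT $\le$ SPA-OPT'', the argument breaks at this step.

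The paper avoids this by using a different LP: the Bansal--Gupta--Ravi configuration LP with auxiliary variables $y_{st}$, objective $\frac{1}{|\scenario|}\sum_{s,t}(1-y_{st})$, and constraints $\sum_{t'\le t,\,i\notin A} z_{ist'}\ge (k-|A|)\,y_{st}$ for all $A\subseteq\boxes$; this objective \emph{does} correctly relax the stopping time. More importantly, the paper does not reduce to GMSSC as a black box. It directly adapts the phased randomized rounding of~\cite{BansGuptRavi2010}: in phase $\ell$ open each box $i$ with probability $\min\bigl(1,\,\kconst\sum_{t\le 2^{\ell}}x_{it}\bigr)$ and, once past the half-cover time $t^{*}_s$, \emph{select} each opened box with probability proportional to $\sum_{t\le 2^{\ell}}z_{ist}$. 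The selection step simultaneously controls the box cost (its expectation is $\kconst\sum_{i,t}c_{is}z_{ist}$) and the success probability of collecting $k$ boxes in a phase (Lemma~5.1 of~\cite{BansGuptRavi2010}), so no thresholding into a set $L_s$ is needed. Your black-box route might be salvageable if you switch to the configuration LP, but you would then have to complete your scaled-and-dilated $(x',z')$ with values $y'_{st}$ satisfying the configuration constraints and bound $\sum_t(1-y'_{st})$; this is a different quantity from the $\sum t\,z'_{ist}$ you actually control.
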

\noindent
 Combining this with Theorem~\ref{thm:PA-to-SPA-reduction} we get the following
 theorem.

\begin{theorem}\label{thm:kcoverage}
  We can efficiently find a partially-adaptive strategy for optimal
  search with $k$ options that is $O(1)$-competitive against the
  optimal partially-adaptive strategy.
\end{theorem}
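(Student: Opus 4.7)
The plan is to adapt the LP-rounding approach of Lemma~\ref{lem:pavspa_scen_aware} to the $k$-selection setting, reducing the problem to an instance of generalized min-sum set cover (GMSSC), for which \cite{BansGuptRavi2010} provides a constant-factor rounding. Once the SPA bound of Lemma~\ref{lem:scen_aware_generalk_UB} is established, Theorem~\ref{thm:kcoverage} follows directly from Theorem~\ref{thm:PA-to-SPA-reduction} at a further multiplicative loss of $e/(e-1)(1+\varepsilon)$.

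First, I would write an LP relaxation analogous to \eqref{lp-spa}, replacing constraint \eqref{eq:mLP_PA_one_value_per_scen} with $\sum_{i,t} z_{ist} = k$ for every scenario $s$; since any PA strategy must select $k$ boxes per scenario, this LP lower-bounds the optimal PA cost. Let $(\bm{x},\bm{z})$ be an optimal fractional solution and define $\opt_{c,s} = \sum_{i,t} c_{is} z_{ist}$, so the average cost per fractional selection for scenario $s$ is $\opt_{c,s}/k$. For a parameter $\alpha > 1$, I define the set of good boxes $L_s = \{ i : c_{is} \leq \alpha\,\opt_{c,s}/k \}$; by a Markov-type argument, the fractional mass $\sum_{i \notin L_s, t} z_{ist}$ is at most $k/\alpha$, so the mass supported on $L_s$ is at least $k(1 - 1/\alpha)$.

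Next, I would construct a GMSSC instance whose sets are the boxes, whose elements are the scenarios, whose incidence relation is $i \in L_s$, and whose covering requirement is $k_s = k$ for every $s$. Zeroing out the contributions of $i \notin L_s$, scaling the restricted solution by $\alpha/(\alpha-1)$, and then repairing the packing constraints \eqref{eq:mLP_PA_1_box_at_a_time} and \eqref{eq:mLP_PA_each_box_once} by truncating and by dilating time by a further $\alpha/(\alpha-1)$ factor via Lemma~\ref{lem:integral_bound} (as in Lemma~\ref{lem:pavspa_scen_aware}) yields a feasible fractional GMSSC solution whose cover-time objective is $O(1)$ times the original LP query time. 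Applying the $O(1)$-approximation rounding of \cite{BansGuptRavi2010} then produces an integral probing order $\sigma$. The SPA algorithm follows $\sigma$ and, for each scenario $s$, stops once $k$ boxes of $L_s$ have been probed, committing to those $k$ boxes; by the definition of $L_s$, the selection cost is at most $\alpha\,\opt_{c,s}$, so the total expected cost is $O(1)$ times the LP value.

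The main obstacle will be verifying that the rounding of \cite{BansGuptRavi2010} --- developed for the native GMSSC LP with uniform single-time covering formulation --- still delivers its $O(1)$ cover-time guarantee on the scaled and dilated fractional solution that we construct, and that the general multi-cover requirement $k_s = k$ does not create analytic complications in their greedy/rounding argument. A secondary concern is controlling the loss when scaling pushes some $\sum_t x'_{it}$ above $1$; I expect this to be absorbed by the $\alpha/(\alpha-1)$ slack, exactly mirroring the truncation/dilation argument already used in Lemma~\ref{lem:pavspa_scen_aware}, so the final approximation ratio will be a constant obtained by optimizing $\alpha$ against the GMSSC rounding constant.
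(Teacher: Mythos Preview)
Your proposal has two genuine gaps. First, the LP you write is not a relaxation when $k>1$: the query-time term $\sum_{i,t} t\,z_{ist}$ inherited from \eqref{lp-spa} charges $\sum_j t_j$ for an integral solution that selects boxes at times $t_1<\cdots<t_k$, whereas the true probing cost is only $t_k$. With $k$ zero-cost boxes the optimal PA cost is $k$ but your LP optimum is $k(k+1)/2$, so the gap is $\Theta(k)$ and no $O(1)$ bound can come out of it. Second, even with a correct LP, the Markov step only guarantees fractional mass $\ge k(1-1/\alpha)$ on $L_s$, not $|L_s|\ge k$: take $k-1$ boxes of cost $0$ (each with $z$-mass $1$) and one box carrying all of $\opt_{c,s}$ (also mass $1$); then $|L_s|=k-1$, the GMSSC instance with requirement $k$ is infeasible, and your stopping rule never terminates for that scenario. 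For $k=1$ this was free because positive mass forces $L_s\neq\emptyset$; for $k>1$ there is no analogue.

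The paper therefore takes a different route in Lemma~\ref{lem:scen_aware_generalk_UB}. It uses a different relaxation, \eqref{eq:pavspaKLP_obj}, with coverage variables $y_{st}$, query-time objective $\sum_t(1-y_{st})$, and the subset constraints~\eqref{eq:pavspaKLP_cover}; this is a correct lower bound on the PA cost. It then rounds this LP \emph{directly} via the phase-based randomized opening of \cite{BansGuptRavi2010} (Algorithm~\ref{alg:PAvsPA_generalK}), with no intermediate threshold set $L_s$: each opened box is selected with probability proportional to its own $z_{ist}$ mass, so the expected cost in any phase is at most $O(1)\sum_{i,t} c_{is}z_{ist}=O(1)\,\opt_{c,s}$, while Lemma~\ref{lem:successprobk} bounds the probability of collecting $k$ boxes per phase. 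This sidesteps both obstacles above.
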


\begin{proof}[Proof of Lemma~\ref{lem:scen_aware_generalk_UB}]
  The LP formulation we use for this problem is a variant of
  \eqref{lp-spa} from Section~\ref{sec:model}, with the following
  changes; we introduce variable $y_{st}$ which denotes the extent to
  which scenario $s$ is covered until time $t$ and
  constraint~\eqref{eq:mLP_PA_one_value_per_scen} is replaced by
  constraint~\eqref{eq:pavspaKLP_cover}. 
  For the LP to reflect this additional
  cost we modify constraint~\eqref{eq:mLP_PA_1_box_at_a_time} to~\eqref{eq:timeKLP} so that a box now
  is probed for $p_i$ steps.
  The program \eqref{eq:pavspaKLP_obj}
  is presented below. Denote by $\opt_{t,s}$ and $\opt_{c,s}$ the
  contribution of the query time and cost of scenario $s$ in optimal
  fractional solution. $\alg_{t,s}$ and $\alg_{c,s}$ denote the
  corresponding quantities for the algorithm. 

\begin{alignat}{3}
	\text{minimize}  &\quad \frac{1}{|\scenario|}\sum_{s\in \scenario, t\in \step}(1-y_{st}) & 
	+ &\quad \frac{1}{|\scenario|} \sum_{i\in \boxes ,s\in \scenario,t\in \step} c_{is}z_{ist} &\tag{LP-$k$-cover}\label{eq:pavspaKLP_obj} \\
	\text{subject to} & \hspace{2cm} \sum_{i\in \boxes}x_{it} &  
	=   & \quad1, & \forall t\in \timeset \label{eq:timeKLP} \\
			& \hspace{2cm} \sum_{t\in \step}x_{it} & \leq & \quad 1, & \forall i\in \boxes \notag \\
	  	& \hspace{2.5cm} z_{ist}   & \leq &\quad x_{it}, & \forall s\in \scenario,i\in \boxes,t\in\step \notag \\
   		& \hspace{1.2cm} \sum_{t'\leq t, i\not\in A}z_{ist'} & \geq &\quad (k-|A|)y_{st}, 
	   		& \quad \quad \forall  A \subseteq \boxes, s\in \scenario,t\in \step \label{eq:pavspaKLP_cover}\\
		& \hspace{2.5cm} x_{it}, & z_{ist}, & y_{st} \in [0,1]  
				& \forall s\in \scenario,i \in \boxes, t\in \step \notag
\end{alignat}

The LP formulation we use is exponential in size but we can efficiently find a
separation oracle, as observed in Section~3.1 in \cite{BansGuptRavi2010}. 

	We claim that Algorithm~\ref{alg:PAvsPA_generalK} satisfies the lemma. The
	algorithm first finds an opening sequence by probing boxes with some
	probability at every step, and then select every opened box with some
	probability until $k$ boxes are probed. Note that the number of boxes
	probed at each ``step'' may be more than one. In the algorithm, we set
	constant $\kconst=8$.\\

	\begin{algorithm}[H]
		\caption{ SPA vs PA, k-coverage}\label{alg:PAvsPA_generalK}
		\KwData{Solution $\bm{x}, \bm{y}, \bm{z}$ to above LP, scenario $s$}
		$\sigma:= $  For each phase $\ell=1,2,\ldots$,  
			open each box $i$ independently with probability $q_{i\ell} = \min \lp(\kconst \sum_{t\le 2^\ell} x_{it},1\rp)$.\\
		
                $\tau_s:=$\\

		\quad Define $t_s^*= \max\{t: y_{st}\leq 1/2\}$.\\
	
		\quad \If{$2^\ell\geq t^*_s$}{
			For each opened box $i$, select it with probability $\min\left(\frac{\kconst\sum_{t\le 2^\ell} z_{ist}}{q_{i\ell}},1\right)$.\\
			Stop when we have selected $k$ boxes in total.
		 }
	\end{algorithm}

	\mbox{}\\
	
	Let $t^*_s$ be the latest time at which $y_{st}\leq 1/2$ as in the description of
	the algorithm. As observed in \cite{BansGuptRavi2010} for scenario $s$, we pay
	at least $1-y_{st^{*}_s}\geq\frac{1}{2}$ for each time $t\in [1,t_s^*]$, thus
	\begin{equation}\label{eq:opt_ts_kcoverage}
		\opt_{t,s} \ge \frac{t_s^*}{2}.
	\end{equation}
  
	Fix a scenario $s$. We first analyze the expected probing time
        of the algorithm for this scenario. Denote by
        $\ell_0=\lceil\log t_s^*\rceil$ the first phase during which
        we have a non-zero probability of selecting a box for scenario
        $s$. Notice that for each box $i$, the probability that it is
        selected in phase $\ell\ge\ell_0$ is
        $\min(1,8\sum_{t'\leq 2^{\ell}}z_{ist'})$. The following lemma
        from \cite{BansGuptRavi2010} bounds the probability that in
        each phase $\ell$ such that $2^{\ell}\geq t_s^*$, at least $k$
        boxes are selected.

	\begin{lemma}[Lemma 5.1 in \cite{BansGuptRavi2010}]\label{lem:successprobk}
	If each box $i$ is selected w.p. at least $\min(1,8\sum_{t'\leq t}z_{ist'})$
	for $t\geq t_s^*$, then with probability at least $1-e^{-9/8}$, at least $k$ different boxes are
	selected.	
	\end{lemma}

	Let $\gamma=e^{-9/8}$. Observe that the number of boxes probed
        in a phase is independent of the event that the algorithm
        reaches that phase prior to covering scenario $s$, therefore
        we get

	\begin{alignat}{3}
			\E{}{\text{query time after phase } \ell_0} 
			& = \sum_{\ell=\ell_0}^\infty
                        \E{}{\text{query time in phase } \ell} \cdot  \Pr{\alg\text{ reaches phase } \ell} \nonumber\\
			& \leq  \sum_{\ell = \ell_0}^\infty \sum_{i\in\boxes}\kconst\sum_{t'\leq 2^{\ell}}x_{it'} \cdot
							\prod_{j=\ell_0}^{\ell-1}\Pr{\text{$\leq k$ boxes selected in phase }j} \label{eqn:kquerytime}\\
			& \leq \sum_{\ell = \ell_0}^\infty  2^{\ell}\kconst\cdot \gamma^{\ell-\ell_0} \nonumber\\
			& =
                        \frac{2^{\ell_0}\kconst}{1-2\gamma}<\frac{2t_s^*\kconst}{1-2\gamma}\leq
                        \frac{4\kconst\opt_{t,s}}{1-2\gamma}. \nonumber 
	\end{alignat}

	The second line follows by noting that the algorithm can reach phase $\ell$
	only if in each previous phase there are less than $k$ boxes selected. The
	third line is by Lemma \ref{lem:successprobk} and constraint~\eqref{eq:timeKLP}. 
	 The last line is by $\ell_0=\lceil\log
	t_s^*\rceil$ and inequality (\ref{eq:opt_ts_kcoverage}).  Since the
	expected query time at each phase $\ell$ is at most $\kconst2^{\ell}$, thus the
	expected query time before phase $\ell_0$ is at most
	$\sum_{\ell<\ell_0}\kconst2^\ell<2^{\ell_0}\kconst<2t_s^*\kconst\leq 4\kconst\opt_{t,s}$. Therefore
	the total query time of the algorithm for scenario $s$ is
	\[
		\alg_{t,s}\leq 4\kconst\opt_{t,s}+\frac{4\kconst\opt_{t,s}}{1-2\gamma}<123.25\opt_{t,s}.
	\]

	To bound the cost of our algorithm, we find the expected total
	value of any phase $\ell$, conditioned on selecting at least $k$ distinct boxes 
	in this phase.
	\begin{align*}
			\textbf{E} [  \text{cost in phase }\ell &|  \text{at least $k$ boxes are selected in phase }\ell]  \\
			& \leq \frac{\E{}{\text{cost in phase }\ell}}{\Pr{\text{at least $k$ boxes are selected in phase }\ell}}\\
			& \leq \frac{1}{1-\gamma}\E{}{\text{cost in phase }\ell}\\
			& \leq \frac{1}{1-\gamma}\sum_{i\in\boxes}\kconst\sum_{t\leq 2^{\ell}}z_{ist}c_{is}
			=\frac{1}{1-\gamma}\kconst\opt_{c,s}<11.85\opt_{c,s}.
	\end{align*}

	Here the third line is by Lemma \ref{lem:successprobk} and the
        last line is by definition of $\opt_{c,s}$. Notice that the
        upper bound does not depend on the phase $\ell$, so the same
        upper bound holds for $\alg_{c,s}$. Thus the total cost
        contributed from scenario $s$ in our algorithm is
	\[\alg_s=\alg_{t,s}+\alg_{c,s}<123.25\opt_{t,s}+11.85\opt_{c,s}\leq123.25\opt_{s}.\]
	Taking the expectation over all scenarios $s$, we conclude that the
	scenario-aware strategy gives constant competitive ratio to the
	optimal partially-adaptive strategy.
\end{proof}

\subsection{Picking a matroid basis of rank \texorpdfstring{$k$}{}}
\label{subsec:matroids}

In this section $\mathcal{F}$ requires us to select a basis of a given
matroid.  More specifically, assuming that boxes have an underlying matroid
structure we seek to find a base of size $k$ with the minimum cost and the minimum query time. We first
design a scenario-aware partially-adaptive strategy in
Lemma~\ref{lem:scen_aware_matroid_UB} that is $O(\log k)$-competitive against
optimal partially-adaptive strategy. Then, in Theorem~\ref{thm:matroid_lb} we
argue that such competitive ratio is asymptotically tight.

\begin{lemma}\label{lem:scen_aware_matroid_UB}
	There exists a scenario-aware partially-adaptive $O(\log k)$-approximate
	algorithm to the optimal partially-adaptive algorithm for picking a matroid
	basis of rank $k$.
 \end{lemma}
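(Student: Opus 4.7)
The plan is to mirror the rounding strategy of Lemma~\ref{lem:scen_aware_generalk_UB} but replace the ``probe $k$ boxes'' covering requirement with ``probe a spanning set of the matroid restricted to low-cost boxes,'' and then invoke the submodular-ranking algorithm of Azar--Gamzu (mentioned in the Related Work) as a black box. First, I would write an LP relaxation analogous to~\eqref{eq:pavspaKLP_obj}, where $y_{st}$ again records the extent to which scenario $s$ is covered by time $t$, but the cardinality constraint~\eqref{eq:pavspaKLP_cover} is replaced by the matroid covering constraint
\[
\sum_{t'\le t,\; i\notin A} z_{ist'} \;\ge\; \bigl(k-\operatorname{rk}(A)\bigr)\, y_{st}, \qquad \forall\, A\subseteq\boxes,\; s\in\scenario,\; t\in\step,
\]
which forces an integral solution to probe enough boxes outside every $A$ so that what remains spans a basis. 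The LP has exponentially many constraints but a polynomial separation oracle, since for fixed $s,t$ minimizing $\operatorname{rk}(A)+\sum_{t'\le t,\, i\notin A} z_{ist'}$ over $A$ is submodular minimization.

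Next, as in Lemma~\ref{lem:pavspa_scen_aware}, I would define low-cost box sets $L_s=\{i:c_{is}\le \kconst\,\opt_{c,s}\}$ for a constant $\kconst$, so that after a Markov step the LP assigns a constant fraction of its $z$-mass for scenario $s$ inside $L_s$. To each scenario $s$ I associate the submodular coverage function $f_s(S)=\min\bigl(\operatorname{rk}(S\cap L_s),k\bigr)/k$; the requirement that the probed boxes contain a basis using only $L_s$ then becomes $f_s(\cdot)=1$. This turns the search for a probing order into an instance of submodular ranking. After the same time-dilation and variable-scaling trick used in Lemma~\ref{lem:pavspa_scen_aware}, the LP solution becomes feasible for the natural LP relaxation of submodular ranking at a constant-factor loss in the objective, and the stopping rule $\tau_s$ that fires as soon as the probed boxes span a basis inside $L_s$ keeps the cost bounded by $\kconst\cdot\opt_{c,s}$.

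Finally I would invoke the $O\bigl(\log(1/\epsilon_{\min})\bigr)$-approximate submodular ranking algorithm, where $\epsilon_{\min}$ is the smallest positive marginal increment of the $f_s$'s. Since each $f_s$ is a normalized matroid rank function, its marginals are nonnegative multiples of $1/k$, so $\epsilon_{\min}\ge 1/k$, which yields the desired $O(\log k)$ competitive ratio for the SPA strategy against the LP, and hence against the optimal PA strategy. Composing with Theorem~\ref{thm:PA-to-SPA-reduction} afterwards produces the PA guarantee in Theorem~\ref{thm:matroid_ub}.

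The main obstacle I expect is the bookkeeping in the feasibility transfer: I need to check that after scaling $\bm{x},\bm{z}$ by $\kconst/(\kconst-1)$, zeroing out contributions outside $L_s$, repairing~\eqref{eq:mLP_PA_each_box_once} by truncation, and dilating time by $\kconst/(\kconst-1)$, the matroid covering inequality above still holds simultaneously for \emph{every} $A\subseteq\boxes$ with the submodular function $f_s$ in place of the cardinality function. This requires comparing the dropped mass outside $L_s$ against the slack $(k-\operatorname{rk}(A))y_{st}$, and showing that the Azar--Gamzu guarantee composes cleanly with the cost bound $\kconst\opt_{c,s}$ without paying an extra $\log$ factor on cost.
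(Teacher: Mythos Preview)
There is a genuine gap in the cost accounting. With $L_s=\{i:c_{is}\le\kconst\,\opt_{c,s}\}$, a basis drawn from $L_s$ has total cost at most $k\cdot\kconst\,\opt_{c,s}$, not $\kconst\,\opt_{c,s}$: you are selecting $k$ boxes, each bounded individually by $\kconst\,\opt_{c,s}$. In Lemma~\ref{lem:pavspa_scen_aware} only one box is chosen, so the threshold trick is lossless there; here it costs you a factor of $k$, which dominates the $O(\log k)$ you earn from submodular ranking and leaves you with only an $O(k)$ guarantee overall. Lowering the threshold to $\kconst\,\opt_{c,s}/k$ does not rescue the argument either, since Markov's inequality then only says the $z$-mass outside $L_s$ is at most $k/\kconst$, which can be essentially all of the mass, and $L_s$ need no longer span the matroid. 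More broadly, any reduction that discards the $c_{is}$ values and hands a pure coverage instance to a black box loses control over \emph{which} basis inside $L_s$ gets returned; submodular ranking only cares about reaching full rank, not about steering toward the cheap elements on which the LP put its $z$-mass.

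The paper sidesteps this by never thresholding. Algorithm~\ref{alg:PAvsPA_matroid} rounds the LP directly: at each step $t$ it opens box $i$ with probability $\min\{\mconst\ln k\cdot\sum_{t'\le t}x_{it'}/t,\,1\}$ and, conditional on opening, selects it with probability proportional to its accumulated $z$-mass. The expected cost of a phase is then automatically $O(\ln k)\sum_{i,t}z_{ist}c_{is}=O(\log k)\,\opt_{c,s}$, with no thresholding loss. The $\ln k$ blowup in the opening probability is exactly what drives the coupon-collector-style rank-building bound of Lemma~\ref{lem:rank}, which yields constant success probability per geometric phase and hence $O(\log k)$ on query time as well. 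So the $\log k$ is paid uniformly on both terms, rather than trading $O(\log k)$ on time for $O(k)$ on cost as in your route. The obstacle you flagged (feasibility transfer after scaling and truncation) is not where the argument breaks; it breaks earlier, at the claim that stopping in $L_s$ controls cost.
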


 Combining this lemma with Theorem~\ref{thm:PA-to-SPA-reduction} we get the
 following theorem.

 \begin{theorem}\label{thm:matroid_ub}
   We can efficiently find a partially-adaptive strategy for optimal
   search over a matroid of rank $k$ that is $O(\log k)$-competitive
   against the optimal partially-adaptive strategy.
 \end{theorem}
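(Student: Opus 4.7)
The theorem reduces, via Theorem~\ref{thm:PA-to-SPA-reduction}, to Lemma~\ref{lem:scen_aware_matroid_UB}: constructing an SPA strategy whose cost is within an $O(\log k)$ factor of the optimal PA cost. My plan mirrors the architecture of the $k$-coverage proof (Lemma~\ref{lem:scen_aware_generalk_UB}), with the matroid structure entering only in two places: the LP's covering constraint and the ``success-probability'' lemma driving the phase analysis. The $\log k$ loss will enter exclusively at the second of these, via a submodular-ranking argument in the spirit of \cite{AzaGamzIftaYinr2009}.

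\textbf{LP.} I would first write down a relaxation identical to \eqref{eq:pavspaKLP_obj} in variables $x_{it}, z_{ist}, y_{st}$, except that the cardinality covering constraint~\eqref{eq:pavspaKLP_cover} is replaced by matroid-base polytope constraints for the fractional selection of scenario $s$ by time $t$. Concretely, letting $r(\cdot)$ denote the rank function of the matroid, impose
\[
\sum_{t'\le t,\,i\in S} z_{ist'} \le r(S)\, y_{st}\quad \forall S\subseteq \boxes,\ \forall s,t,\qquad\text{and}\qquad \sum_{t'\le t,\,i\in \boxes} z_{ist'} \ge k\, y_{st}\quad \forall s,t.
\]
These force, for every scenario $s$ covered to extent $y_{st}$ by time $t$, the fractional vector $(z_{ist'})_{i,t'\le t}$ to lie in $y_{st}$ times the matroid base polytope, which is exactly the fractional analogue of ``the probed boxes contain a basis.'' Separation in polynomial time is standard for matroid polytopes. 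The definition of $t_s^*$ and the inequality $\opt_{t,s}\ge t_s^*/2$ carry over verbatim.

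\textbf{Rounding.} The algorithm is the natural analogue of Algorithm~\ref{alg:PAvsPA_generalK}: in phase $\ell$, open each box $i$ independently with probability $\min(\kconst\sum_{t\le 2^\ell} x_{it},1)$, and once $2^\ell\ge t_s^*$, for each opened $i$ select it with probability $\min\bigl(\kconst\sum_{t\le 2^\ell} z_{ist}/q_{i\ell},1\bigr)$, stopping as soon as the selected set contains a basis. The only new ingredient is to set $\kconst = \Theta(\log k)$ and to replace Lemma~\ref{lem:successprobk} by the following: if each box $i$ is \emph{selected} in a phase with probability at least $\min(1,\kconst\sum_{t'\le t}z_{ist'})$, then with constant probability the selected boxes contain a matroid basis. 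Given this, the geometric summation over phases (equation~\eqref{eqn:kquerytime} and the accompanying cost computation) goes through word for word, producing $\alg_{t,s}=O(\kconst)\opt_{t,s}$ and $\alg_{c,s}=O(\kconst)\opt_{c,s}$, and hence an $O(\log k)$ SPA approximation that plugs into Theorem~\ref{thm:PA-to-SPA-reduction}.

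\textbf{Main obstacle.} The hard step is the matroid success-probability lemma. In the cardinality case a single Chernoff/second-moment bound suffices because any $k$ distinct selections cover scenario $s$. For matroids, selected elements can be wasted in already-spanned flats, and one must argue about the submodular rank $r(\cdot)$ restricted to the random selection. My plan is to adapt the submodular-ranking analysis of \cite{AzaGamzIftaYinr2009}: using the LP's base-polytope constraints, the expected rank increment after scaling selections by $\kconst$ can be lower bounded by $\kconst$ times the fractional increment, but because one must union-bound over up to $k$ rank-one ``layers'' that each need to be filled, the standard $H_k=\Theta(\log k)$ inflation of $\kconst$ is necessary (and tight, in view of Theorem~\ref{thm:matroid_lb}) to make every layer succeed simultaneously with constant probability. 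Once this lemma is in hand, the rest of the proof is a direct transcription of Lemma~\ref{lem:scen_aware_generalk_UB}.
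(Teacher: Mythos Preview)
Your high-level plan is sound, but it departs from the paper's argument in both the LP and the algorithm, and one of those departures is a genuine error.

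\textbf{LP constraint.} Your inequality $\sum_{t'\le t,\,i\in S} z_{ist'} \le r(S)\, y_{st}$ is \emph{not} a valid relaxation. In an integral solution the basis $B_s$ for scenario $s$ is probed at times $t_1<\cdots<t_k$ and $y_{st}=0$ for all $t<t_k$; yet for any $t\in[t_1,t_k)$ the left side is already positive. The paper sidesteps this by imposing only the deficiency direction $\sum_{t'\le t,\,i\notin A} z_{ist'}\ge (k-r(A))\,y_{st}$ (plus an unscaled rank bound $\sum_{t,\,i\in A}z_{ist}\le r(A)$ that the analysis never uses). Drop your upper-bound constraint and keep the deficiency one; that is all your rounding needs, and separation is still via submodular minimization.

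\textbf{Algorithm and key lemma.} The paper does \emph{not} reuse the one-shot-per-phase scheme of Algorithm~\ref{alg:PAvsPA_generalK}. Instead it samples at every step $t$ with probability $\min\{c\ln k\cdot(\sum_{t'\le t}x_{it'})/t,1\}$ and only uses geometrically growing phases for analysis. Its key Lemma~\ref{lem:rank} bounds the expected number of steps to reach full rank by a direct coupon-collector calculation: at rank $j$ with span $A_j$, the deficiency constraint gives $\sum_{i\notin A_j}\sum_{t'\le t}z_{ist'}\ge (k-j)/2$, so a rank increment occurs with probability $\Omega((k-j)\ln k/t)$ per step, and the harmonic sum $\sum_{j}\frac{1}{(k-j)\ln k}=O(1)$ closes the bound. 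Markov then gives constant per-phase success.

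Your alternative one-shot lemma---independent sampling with marginals $\min(1,\Theta(\log k)\sum_{t'\le t}z_{ist'})$ yields a basis with constant probability---is true, but your ``union bound over $k$ layers'' sketch does not establish it (the natural union bound is over hyperplanes, of which there can be exponentially many). The clean proof is to Poissonize: view the sample as arrivals on $[0,1]$ with rate $\Theta(\log k)z_i$ for element $i$; the waiting time from rank $j$ to $j+1$ is exponential with rate at least $\Theta(\log k)(k-j)$, so the expected time to full rank is $H_k/\Theta(\log k)=O(1)$, and Markov finishes. This is precisely the paper's sequential rank-increment argument in disguise. So with the LP corrected, your route works, but the heart of the success lemma is the same harmonic-sum computation the paper carries out directly.
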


The LP formulation is similar to the one for the $k$-coverage constraint,
presented in the previous section.  Let $r(A)$ for any set $A\subseteq \boxes$
denote the rank of this set. The constraints are the same except for
constraints \eqref{eq:LP_mat_less_than_rank} and
\eqref{eq:LP_mat_subsets_rank} that ensure we select no more than the rank of
a set and that the elements that remain unselected are adequate for us to cover
the remaining rank respectively.
%
	\begin{alignat}{3}
		\text{minimize}   \quad & \frac{1}{|\scenario|}\sum_{s\in \scenario, t\in \step}(1 - y_{st}) &\quad  + \quad &\frac{1}{|\scenario|}\sum_{i\in \boxes,s \in \scenario, t\in \step}c_{si}z_{ist}  \tag{LP-matroid} \\
		\text{subject to} \quad  &\hspace{1.4cm} \sum_{i\in \boxes}x_{it}  & \quad =\quad & 1,      & \forall t\in \timeset\notag \\
			   & \hspace{1.4cm}\sum_{t\in \step}x_{it}          & \quad \leq \quad  & 1,   & \forall i\in \boxes\notag \\ 
			   & \hspace{0.7cm} \sum_{t\in \step, i\in
                             A}z_{ist} & \quad \leq	\quad &  r(A),
                           & \forall s\in \scenario, A\subseteq
                           \boxes \label{eq:LP_mat_less_than_rank}\\ 
			   & \hspace{2cm} z_{ist} 	  & \quad \leq \quad & x_{it}, & \forall	 s\in \scenario  ,i\in \boxes ,t\in \step \notag\\ 
			   &\hspace{0.65cm} \sum_{i\not\in A}\sum_{t'\leq t}z_{ist'} &	 \quad \geq \quad & (r([n])-r(A))y_{st},\quad 
				   & \forall A\subseteq \boxes, s\in
                                   \scenario ,t\in
                                   \step \label{eq:LP_mat_subsets_rank}\\ 
		& \hspace{2.5cm} x_{it}, & z_{ist}, & y_{st} \in [0,1]  
				& \forall s\in \scenario,i \in \boxes, t\in \step \notag
	\end{alignat}

\paragraph{Solving the LP efficiently}
The LP formulation we use is exponential in size but we can efficiently find a
separation oracle.  Every set of constraints can be verified in polynomial time
except for constraints (\ref{eq:LP_mat_subsets_rank}).  Rewriting these last
constraints we get

\[
	\sum_{i}\sum_{t' \leq t}z_{ist'} -\sum_{i \in A}\sum_{t' \leq t}z_{ist'}  
		\geq r([n]) -r(A), \;\;\; \forall A\subseteq \boxes, t\in\step.
\]

Then the problem is equivalent to minimizing the function $ g(A) = r(A) -
\sum_{i \in A}\sum_{t' \leq t}z_{ist'} $ over all subsets of items $A \subseteq \boxes$. The
function $g(A)$ is submodular since the rank function $r(A)$ is submodular,
therefore we can minimize it in polynomial time \cite{GrotLovaSchr1981}. The
formal statement of the main theorem is the following.

\begin{proof}[Proof of Lemma~\ref{lem:scen_aware_matroid_UB}]
	We claim that Algorithm~\ref{alg:PAvsPA_matroid} satisfies the lemma. The
	algorithm first finds an opening sequence by probing boxes with some
	probability at every step, and then knowing the scenario selects every
	opened box with some probability until a basis of rank $k$ is found. In 
	the algorithm we set constant $\mconst=64$.\\

	\begin{algorithm}[H]
		\caption{ SPA vs PA, matroid}\label{alg:PAvsPA_matroid}
		\KwData{Solution $\bm{x}, \bm{y}, \bm{z}$ to above LP, scenario $s$}
	
		$\sigma :=$ for every $t =1,\ldots, n$, open each box $i$ independently
			with probability $q_{it} = \min \lp\{\mconst \ln k\frac{\sum_{t'\le t}
			x_{it'} }{t},1\rp\}$. \\
		
                $\tau_s:=$\\
                \quad Let $t_s^*= \min\{t: y_{st}\leq 1/2\}$.\\
		\quad \If{$t>t^*_s$}{
			For each opened box $i$, select it with probability $\min \lp\{\frac{\mconst\ln k\sum_{t'\le t} z_{ist'}}{t q_{it}} ,\ 1\rp\}$.\\
			Stop when we find a base of the matroid.
		}
	\end{algorithm}

	\mbox{}\\

	In scenario $s$, let phase $\ell$ be when $t\in (2^{\ell-1}t_s^*, 2^\ell t_s^*]$.
	The proof has almost identical flow as the proof for $k$-coverage case. 
	We still divide the time after $t^*_s$ into exponentially increasing
	phases, while in each phase we prove that our success probability is a constant. 
	The following lemma gives an upper bound for the query time needed 
	in each phase to get a full rank base of the matroid. The proof is deferred
	to Section~\ref{sec:apndx_extension} of the appendix.

	\begin{lemma}\label{lem:rank}
		In phase $\ell$, the expected number of steps needed to select a set of full rank is at most
		$(4+2^{\ell+2} /\mconst)t^*_s$.
	\end{lemma}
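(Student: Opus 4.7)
The plan is to adapt the phase-based analysis of Lemma~\ref{lem:scen_aware_generalk_UB} (the $k$-coverage case) by replacing the coupon-collector argument with a matroid rank-growth argument, exploiting the extra $\ln k$ factor in the selection probabilities that is absent in the $k$-coverage case. Throughout, let $B_t$ denote the set of boxes selected by the end of step $t$. Substituting the open and select probabilities of Algorithm~\ref{alg:PAvsPA_matroid} shows that for $t > t_s^*$ each box $i$ is selected at step $t$ independently with probability $p_{it} = \min\{\alpha \ln k \sum_{t'\le t} z_{ist'}/t,\,1\}$, and crucially selections are mutually independent both across boxes and across time steps.

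\textbf{Step 1 (LP consequence).} For $t > t_s^*$ we have $y_{st} \ge 1/2$. Applying constraint~\eqref{eq:LP_mat_subsets_rank} with $A = \mathrm{span}(B_{t-1})$ (so $r(A) = r(B_{t-1})$) yields
\[
\sum_{i \notin \mathrm{span}(B_{t-1})}\sum_{t'\le t} z_{ist'} \;\ge\; \tfrac{1}{2}\bigl(k - r(B_{t-1})\bigr).
\]
For $t \le 2^\ell t_s^*$ in phase $\ell$, the aggregate selection probability at step $t$ of boxes outside $\mathrm{span}(B_{t-1})$ is therefore at least $\alpha \ln k\,(k - r(B_{t-1}))/(2^{\ell+1} t_s^*)$. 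By box-wise independence, the probability that $r(B_t) > r(B_{t-1})$ at this step is at least $1 - \exp\bigl(-\alpha \ln k\,(k - r(B_{t-1}))/(2^{\ell+1} t_s^*)\bigr)$.

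\textbf{Step 2 (Contraction and tail bound).} Let $D_t := \mathbf{E}[k - r(B_t)]$. Iterating the per-step bound from Step~1 over $T$ steps in phase $\ell$, and combining it with the integer-valued bound $r(B_t) - r(B_{t-1}) \in \{0,1,\dots\}$, I would show a tail bound of the form
\[
\Pr[r(B_T) < k] \;\le\; \exp\!\Bigl(-\tfrac{T \alpha \ln k}{2^{\ell+2}\,t_s^*} + O(\ln k)\Bigr).
\]
Choosing $T = \Theta(2^\ell t_s^*/\alpha)$ drives this failure probability below $1/2$. Combining the exponential tail with $\mathbf{E}[T^\star] = \sum_{T \ge 1} \Pr[T^\star \ge T]$ yields an expected-time bound of $(2^{\ell+2}/\alpha) t_s^* + 4\,t_s^*$, as claimed. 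The additive $4 t_s^*$ buffer absorbs (i) the $\ln(2k)/\ln k$ slack introduced when converting the deficiency bound into a success probability, and (ii) small-$\ell$ corrections where the phase length $2^{\ell-1} t_s^*$ is itself only a small multiple of $t_s^*$.

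\textbf{Main obstacle.} The delicate step is justifying the contraction of $D_t$ in Step~2 with the right leading constant. The elementary inequality $1 - e^{-x(k-j)} \ge x(k-j)/2$ that would give a clean multiplicative recursion $D_t \le D_{t-1}\bigl(1 - \alpha \ln k/(2^{\ell+2} t_s^*)\bigr)$ only holds when $x(k-j) \lesssim 1$, whereas early in the phase $k - r(B_{t-1})$ can be as large as $k$, placing us in an ``additive decrease'' regime in which rank grows by only $\Theta(1)$ per step. A naive application of Jensen's inequality to the convex function $\exp(-c(k-r(B_{t-1})))$ moreover goes in the wrong direction, so controlling $D_t$ through its own expectation alone is insufficient. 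Bridging the additive regime and the geometric regime (which dominates once $D_t = O(2^\ell t_s^*/(\alpha \ln k))$) through either a potential function such as $\ln D_t$ or a careful case split, while preserving the constants in the lemma's statement, is what ultimately forces the choice $\alpha = 64$ in Algorithm~\ref{alg:PAvsPA_matroid}.
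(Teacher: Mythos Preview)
Your Step~1 is correct and matches the paper exactly: applying constraint~\eqref{eq:LP_mat_subsets_rank} with $A=\mathrm{span}(B_{t-1})$ together with $y_{st}\ge 1/2$ for $t>t_s^*$ gives the right lower bound on the mass of boxes outside the current span.

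The gap is in Step~2. You set out to track $D_t=\E{}{k-r(B_t)}$ via a contraction recursion and then extract a tail bound, and you correctly identify that this runs into trouble: the inequality $1-e^{-x}\ge x/2$ fails for large $x$, Jensen applied to $e^{-c(k-r)}$ points the wrong way, and you are left with an ``additive'' early regime that must be stitched to a ``geometric'' late regime. You do not actually carry this out, and the obstacle you name is real for this route.

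The paper sidesteps the obstacle entirely by a much more elementary decomposition. Rather than tracking the deficiency, it writes the total time as the sum over $j=0,\dots,k-1$ of the waiting time to move from rank $j$ to rank $j+1$. Each such waiting time is stochastically dominated by a geometric random variable with success probability at least
\[
1-\exp\!\Bigl(-\tfrac{\mconst\ln k\,(k-j)}{2^{\ell+1}t_s^*}\Bigr)\;\ge\;\min\!\Bigl(\tfrac12,\ \tfrac{\mconst\ln k\,(k-j)}{2^{\ell+2}t_s^*}\Bigr),
\]
using only $1-e^{-a}\ge\tfrac12\min(1,a)$. Hence the expected waiting time for the $j$-th increment is at most $2+\tfrac{2^{\ell+2}t_s^*}{\mconst(k-j)\ln k}$, and summing over $j$ gives
\[
\sum_{j=0}^{k-1}\Bigl(2+\tfrac{2^{\ell+2}t_s^*}{\mconst(k-j)\ln k}\Bigr)\;=\;2k+\tfrac{2^{\ell+2}t_s^*}{\mconst}\cdot\tfrac{H_k}{\ln k}\;\lesssim\;\tfrac{2^{\ell+2}}{\mconst}t_s^*+4t_s^*,
\]
where the $\ln k$ in the selection probability is precisely what cancels the harmonic sum $H_k$, and $2k\le 4t_s^*$ since $t_s^*\ge k/2$. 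This is the whole argument: no tail bound, no potential function, no regime split. The ``main obstacle'' you describe simply does not arise once you sum expected geometric waiting times rank-by-rank instead of iterating a deficiency recursion step-by-step.
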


	Define $\mathcal{X}$ to be the random variable indicating number of steps needed 
	to build a full rank subset. The probability that we build a full rank basis 
	within some phase $\ell\geq 6$ is
	\begin{align}\label{eq:prob_cont}
		\Pr{\mathcal{X} \leq 2^{\ell-1}t_s^*} 
				 \ge  1 - \frac{\E{}{\mathcal{X}}}{2^{\ell-1}t_s^*} 
				 \ge  1 - \frac{1}{2^{\ell-1}t_s^*}(4+2^{\ell+2} /\mconst)t^*_s 
				=1-2^{3-\ell}-\frac{8}{\mconst}\geq \frac{3}{4},
	\end{align}
	where we used Markov's inequality for the first inequality and
	Lemma~\ref{lem:rank} for the second inequality.  To calculate the total
	query time, we sum up the contribution of all phases. 
	\begin{align}
			\E{}{\text{query time after phase 6}} &= 
					\sum_{\ell=6}^\infty \E{}{\text{query time at phase } \ell} \cdot  \Pr{\alg\text{ reaches phase } \ell}  \nonumber\\
			 & \leq \sum_{\ell = 6}^\infty  \sum_{t=2^{\ell-1}t_s^*+1}^{2^{\ell}t_s^*}\sum_{i\in\boxes}\mconst\ln k\cdot\frac{\sum_{t'\leq t}x_{it'}}{t}\lp(\frac{1}{4}\rp)^{\ell-6} \label{eqn:matroidquerytime}\\
			 & \leq \sum_{\ell = 6}^\infty  2^{\ell-1}t_s^*\mconst\ln k\cdot \lp(\frac{1}{4}\rp)^{\ell-6}  \nonumber\\
			 & = \frac{128\mconst\ln kt_s^*}{3}\leq\frac{256c\ln k\opt_{t,s}}{3} \nonumber. 
	\end{align}

	Here the second line uses that each box $i$ is
        probed at each time step $t$ with probability $\mconst\ln k\cdot\frac{\sum_{t'\leq t}x_{it'}}{t}$. 
        The third line follows from constraint~\eqref{eq:timeKLP}. The last line
        uses $t_s^*\leq 2\opt_{t,s}$ by \eqref{eq:opt_ts_kcoverage}. Since the expected query time at
        each step is $\mconst\ln k$ and there are $2^5t_s^*\leq 64\opt_{t,s}$
        steps before phase $6$, we have
	\[
		\alg_{t,s}\leq \mconst\ln k\cdot 64\opt_{t,s}+\frac{256\mconst\ln k\opt_{t,s}}{3} 
			= O(\log k)\opt_{t,s}.
	\]

	As for $k$-coverage case, to bound the cost of our algorithm,
        we find the expected total cost of any phase $\ell\geq 6$,
        conditioned on boxes forming a full rank base are selected in
        this phase.
	\begin{align*}
			\textbf{E} [\text{cost in phase }\ell| & \text{full rank base selected in phase }\ell]\\
		 & \leq \frac{\E{}{\text{cost in phase }\ell}}{\Pr{\text{full rank base selected in phase }\ell}}\\
		&\leq \frac{1}{3/4}\E{}{\text{cost in phase }\ell}\\
		&\leq \frac{1}{3/4}\sum_{i\in\boxes}\sum_{t=2^{\ell-1}t_s^*+1}^{2^{\ell}t_s^*}\mconst\ln k
				\frac{\sum_{t'\leq t}z_{ist'}c_{is}}{t}\\
		&\leq \frac{1}{3/4}\sum_{t=2^{\ell-1}t_s^*+1}^{2^{\ell}t_s^*}\mconst\ln k\sum_{i\in\boxes}
				\frac{\sum_{t'\in\step}z_{ist'}c_{is}}{2^{\ell-1}t_s^*}\\
		&=\frac{1}{3/4}\mconst \ln k\opt_{c,s}
			=O(\log k)\opt_{c,s}.
	\end{align*}
	Such upper bound of conditional expectation does not depend on $\ell$, thus
	also gives the same upper bound for $\alg_{c,s}$. Therefore
	$\alg_s=\alg_{t,s}+\alg_{c,s}\leq O(\log k)(\opt_{t,s}+\opt_{c,s})=O(\log
	k)\opt_{s}$.  Take expectation over $s$, we have the scenario-aware
	adaptive strategy Algorithm \ref{alg:PAvsPA_matroid} is $O(\log
	k)$-competitive against the optimal partially-adaptive strategy.
\end{proof}

Now we argue that the $O(\log k)$-approximation we got is essentially tight. 
The following theorem implies that under common complexity assumption,
no efficient fully-adaptive algorithm can get asymptotically better competitive ratio,
even compared to optimal non-adaptive cost.

\begin{theorem}\label{thm:matroid_lb}
  Assuming NP$\not\subseteq$RP, no
  computationally efficient fully-adaptive algorithm 
  can approximate the optimal non-adaptive cost within a factor of $o(\log k)$.
\end{theorem}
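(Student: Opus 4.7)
The plan is to reduce from the bicriteria hardness of Set Cover (Lemma~\ref{lem:LB_max_coverage}) via a partition matroid comprising $k$ independent copies of a hard set cover instance. Intuitively, because scenarios are drawn independently across the $k$ parts, information gleaned by probes in one part is useless in any other, so any FA strategy essentially must solve each part in isolation; the bicriteria hardness then prevents efficient adaptive covering of the random element within each part in much fewer than the optimal number of probes, and the $\Omega(\log k)$ gap emerges by trading off per-part probing cost against the penalty for failing to find a zero-cost box.

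To carry this out, let $\mathcal{SC}=([m],\{S_1,\ldots,S_n\})$ be a Set Cover instance on which Lemma~\ref{lem:LB_max_coverage} is tight, with optimum $\opt_{SC}$, and assume $n \ge k$ (padding if necessary). I build a search instance on $kn$ boxes indexed by $(p,i)\in[k]\times[n]$, equipped with the partition matroid whose bases select exactly one box from each part. Each scenario is a tuple $s=(e_1,\ldots,e_k)$ with the $e_p$ drawn independently and uniformly from $[m]$; I set $c_{(p,i),s}=0$ if $e_p\in S_i$ and $c_{(p,i),s}=H$ otherwise, where $H$ is a tunable penalty. The non-adaptive strategy that probes a min set cover $T$ in each of the $k$ parts pays $k|T|=k\opt_{SC}$ in probing cost and zero box cost on every scenario, so $\opt_{NA}\le k\opt_{SC}$.

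For the lower bound on efficient FA, independence of the $e_p$'s across parts means that observations in one part reveal nothing about the others, so (by ordering probes part-by-part without loss of generality) the expected cost of any FA algorithm decomposes additively across parts. Within a single part, FA reduces to an efficient adaptive membership-query procedure searching for some $S_i$ containing the random element $e_p$. Applying the adaptive extension of Lemma~\ref{lem:LB_max_coverage} discussed below, after $K$ probes the success probability is at most $1-(1-(1+\epsilon)/\opt_{SC})^K$, so the expected per-part cost is at least $K + H(1-(1+\epsilon)/\opt_{SC})^K$. Optimizing over $K$ in the valid range $[1,\tfrac{\log n}{1+\epsilon}\opt_{SC}]$ yields a per-part lower bound of $\Omega(\opt_{SC}\log(H/\opt_{SC}))$, hence a total FA cost of $\Omega(k\opt_{SC}\log(H/\opt_{SC}))$. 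Setting $H=k\opt_{SC}$ gives the desired $\Omega(\log k)$ ratio against $\opt_{NA}$.

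The main obstacle is the adaptive strengthening of Lemma~\ref{lem:LB_max_coverage}: FA may use responses of earlier probes to steer later ones, and I need to show that this adaptivity helps only by a constant factor in the coverage probability. The plan is to convert an efficient randomized adaptive procedure into an efficient randomized non-adaptive one whose success probability on a uniformly random element is comparable---for instance by sampling a random leaf of FA's decision tree to induce a distribution over set families of size at most $K$, and then derandomizing to select a single such family---and then apply Lemma~\ref{lem:LB_max_coverage} to the resulting non-adaptive algorithm. This use of randomness in the reduction is precisely why the theorem strengthens the complexity hypothesis from $\mathrm{P}\neq\mathrm{NP}$ to $\mathrm{NP}\not\subseteq\mathrm{RP}$.
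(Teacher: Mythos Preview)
Your construction---a rank-$k$ partition matroid whose $k$ parts are copies of a hard Set Cover instance, with $\opt_{NA}\le k\,\opt_{SC}$---is the same as the paper's, but the analyses diverge. The paper does \emph{not} route through Lemma~\ref{lem:LB_max_coverage} or a tuned finite penalty $H$; it uses $0/\infty$ costs and a direct approximation-preserving reduction to Set Cover. If an efficient FA achieves ratio $\alpha$, its expected probing cost is at most $\alpha k\,\opt_{SC}$. At the moment $\ell$ when the FA has found $0$-cost boxes in $k/2$ of the segments, a random scenario is still uncovered with probability at least $1-1/\sqrt{e}$, so $\alg_{FA}\ge (1-1/\sqrt{e})\ell$ and hence $\ell=O(\alpha k)\opt_{SC}$. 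By averaging, some segment was completed using at most $2\ell/k=O(\alpha)\opt_{SC}$ probes, and those probes give an $O(\alpha)$-approximate Set Cover solution; Dinur--Steurer then forces $\alpha=\Omega(\log k)$. The $\mathrm{NP}\not\subseteq\mathrm{RP}$ hypothesis enters only because the $k^k$ scenarios must be replaced by a polynomial sample via Lemma~\ref{lem:learn}, making the overall reduction randomized.

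Your per-part route with $H=k\,\opt_{SC}$ can be completed, but two points in the proposal need repair. First, the ``main obstacle'' you flag is not one: within a single part every probe prior to the first $0$-cost hit returns $H$, so the probe order in that part is non-adaptive conditional on the algorithm's coins and the other parts' outcomes---no adaptive strengthening of Lemma~\ref{lem:LB_max_coverage} is needed, only its randomized version. Second, you fix $\mathcal{SC}$ ``on which Lemma~\ref{lem:LB_max_coverage} is tight'' before specifying the algorithm, but the lemma's hard instance depends on the algorithm, and the induced single-part algorithm depends on $\mathcal{SC}$ through the other $k{-}1$ parts---a circularity you do not address. The fix is to package the whole simulation (build the $k$-part instance from the input $\mathcal{SC}$, run the given FA on a sampled scenario, output the part-$1$ probe set) as one explicit efficient randomized procedure and invoke the lemma against that single algorithm. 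The paper's averaging argument sidesteps both issues.
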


\begin{proof}
	We provide an approximation-preserving reduction from Set Cover problem 
	to finding good fully-adaptive strategy.
	Let $\mathcal{SC} = \lp( [n], \{S_1, \ldots, S_k\}\rp)$ be a Set Cover
	instance on a ground set of $n$ elements, and $k$ sets $S_1,\ldots, S_k$.
	Denote by $\opt_{SC}$ the optimal solution to this Set Cover instance.  We
	construct an instance of partition matroid coverage, where the rank is $k$.
	Each segment of the partition consists of multiple copies of the sets
	$S_1,\ldots, S_k$.  Every scenario consists of one set from each segment,
	as seen in Table~\ref{table:reduction}.

	\begin{table}[H]
			\centering
		\begin{tabular}{lcccc}
		\multicolumn{1}{l}{}    & \multicolumn{1}{l}{\textbf{Segment $1$}} &
		\multicolumn{1}{l}{\textbf{Segment $2$}} & \multicolumn{1}{l}{$\ldots$}
												 &
												 \multicolumn{1}{l}{\textbf{Segment
												 $k$}} \\
		\textbf{Scenario $1$}   & $S_1$    & $S_1$     & $\ldots$  & $S_1$  \\
		\textbf{Scenario $2$}   & $S_1$    & $S_1$     &   & $S_2$  \\
		$\ldots$                & $\ldots$ & $\ldots$  &  $\ldots$ & $\ldots$  \\
		\textbf{Scenario $k^k$} & $S_k$    & $S_k$     &   & $S_k$
		\end{tabular}
			\caption{Instance of partition matroid $k$-coverage}
				\label{table:reduction}
	\end{table}

	A scenario is covered when $k$ elements are selected, one for each of the $k$ sets of
	every segment. This is an instance of the probing problem we study
	with cost for each box being 0 or $\infty$. 
	Similarly, we say a segment is \emph{covered} when we have
	chosen at least one element in every set it contains. Denote by $\alg_{FA}$
	and $\opt_{NA}$ the solution of any fully-adaptive algorithm and the
	optimal non-adaptive solution respectively for this transformed instance.
	Any fully-adaptive algorithm will select elements, trying to cover all
	scenarios. Initially, observe that $\opt_{NA}\leq k \opt_{SC}$, since the
	non-adaptive will at most solve the Set Cover problem in the $k$ different
	segments. We assume that we can approximate the non-adaptive strategy with
	competitive ratio $\alpha$ i.e.  $\alg_{FA} = O(\alpha)\opt_{NA} = O(k\alpha) \opt_{SC}$.\\

	Let $\ell$ be the number of elements $\alg_{FA}$ has selected when exactly
	$k/2$ segments are covered and let $s$ be a randomly chosen
	scenario.  For each one of the $k/2$ uncovered segments, there are at least
	1 uncovered set. Therefore
	\[
		\Pr{s \text{ is uncovered} } \geq 1-\lp(1-\frac{1}{k} \rp)^{k/2} \approx 1-\frac{1}{\sqrt{e}}.
	\]

	This implies $\alg_{FA} \geq \ell \lp( 1-\frac{1}{\sqrt{e}} \rp)$ thus
	$\ell = O(k \alpha)\opt_{SC}$. Notice that there exists some segment that is
	covered using $\ell/(k/2) = O(\alpha)\opt_{SC}$ elements. Thus any efficient algorithm 
	that provides $O(\alpha)$-approximation of non-adaptive strategy using fully-adaptive strategy
	can be transformed efficiently to an $O(\alpha)$-approximation algorithm for Set Cover.

	Although above reduction from set cover has $k^k$ scenarios that cannot be
	constructed in polynomial time, by Lemma~\ref{lem:learn}
	poly$(n,\frac{1}{\epsilon},\log\frac{1}{\delta})$
	samples of all scenarios is sufficient to get accuracy within $\epsilon$ with
	probability $1-\delta$ for any probing strategy. 
	Let $\e=1$ and $\delta=\frac{1}{3}$. The above reduction implies that if
	there is a poly-time algorithm that computes a probing strategy with 
	cost $o(\log k)\opt_{NA}$, there exists a poly-time algorithm to solve 
	Set Cover with competitive ratio $o(\log k)$ with
	probability $\frac{2}{3}$. By \cite{DinuSteu2014} such algorithm cannot exist
	assuming NP$\not\subseteq$RP.
\end{proof}

\section{Boxes with General Probing Times: Revisiting the Main Results}\label{sec:generalprobingcost}

In this section, we consider settings where different boxes require
different amounts of time to probe. Let $p_i$ denote the probing time
required to probe box $i$.  We assume $p_i\in [1,P]$ for some $P$ that
is polynomially large in $n$. The running time and sample complexity
of our algorithms will depend linearly on $P$. Henceforth we will
assume that the $p_i$'s are integers: rounding up each probing time to
the next integer only increases the total objective function value by
a factor of at most $2$.

\subsection{Ski rental with general rent cost and learnability via sampling}
We first investigate the learnability of the optimal search algorithm via
sampling polynomially many scenarios, i.e.
Theorem~\ref{thm:PA-to-SPA-reduction}. Recall that Theorem~\ref{thm:PA-to-SPA-reduction}
requires two building blocks: Corollary~\ref{lem:scenario_aware_approx} which
describes a reduction from a general strategy to a scenario-aware strategy and
Lemma~\ref{lem:learn} that guarantees that a small sample over scenarios suffices to achieve a
good approximation.

We proved Lemma~\ref{lem:learn} by observing that for
any probing order $\pi$, the cost of any scenario $s$ is bounded in a
polynomial range.  This still holds since the total probing time is
bounded by $nP$.

In order to show Corollary~\ref{lem:scenario_aware_approx} in our case, we need to
solve a further generalization of the ski rental problem where we have arbitrary rent costs.
Specifically, in the ski rental problem with general rent cost, the input is a sequence
of non-increasing buy costs, $a_1\ge a_2\ge a_3 \ge \ldots$ as well as an integral rent
costs $p_t$ for each time $t$. At each step $t$, the algorithm decides to either rent skis at a cost
of $p_t$, or buy skis at a cost of $a_t$. We show that Lemma~\ref{lem:gen_ski_rental} 
still holds beyond the unit-rental-cost case. Together with Lemma~\ref{lem:learn} we recover Theorem~\ref{thm:PA-to-SPA-reduction}.
\begin{lemma}\label{cor:varying_rent}
	 Consider any sequence of integral buy cost $a_1\ge a_2\ge \ldots$ and integral rent cost $p_1,p_2,\cdots$. 
   There exists an online algorithm that
	chooses a stopping time $t$ so that
	\[ \sum_{i=1}^{t-1}p_i + a_{t} \le \frac{e}{e-1} \min_{j} \lp\{ \sum_{i=1}^{j-1}p_i +a_j\rp\}.\]
\end{lemma}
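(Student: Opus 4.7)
The plan is to reduce the general-rent-cost setting to the unit-rent-cost setting of Lemma~\ref{lem:gen_ski_rental} by an ``expansion'' of time. Since each $p_t$ is a positive integer, I would replace each original step $t$ by a block of $p_t$ consecutive unit-rent sub-steps and define the buy cost at every sub-step in that block to be $a_t$. Writing $P_t=\sum_{i<t}p_i$, this produces a new sequence $b_1,b_2,\ldots$ where $b_u=a_t$ for $u\in\{P_t+1,\ldots,P_{t+1}\}$. Because $a_1\ge a_2\ge\cdots$, the expanded sequence is automatically non-increasing, so Lemma~\ref{lem:gen_ski_rental} applies.

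Next I would check that the expansion preserves the offline optimum. Inside the block corresponding to original step $t$, the cost $u-1+b_u$ is minimized at the leftmost sub-step $u=P_t+1$, yielding $P_t+a_t$; taking the min over blocks gives
\[
\min_u\{u-1+b_u\}\;=\;\min_t\{P_t+a_t\}\;=\;\min_j\Bigl\{\sum_{i=1}^{j-1}p_i+a_j\Bigr\}.
\]
Thus Lemma~\ref{lem:gen_ski_rental} applied to $\{b_u\}$ produces an online stopping sub-step $u^\star$ satisfying $u^\star-1+b_{u^\star}\le\frac{e}{e-1}\min_j\{\sum_{i<j}p_i+a_j\}$.

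I then translate $u^\star$ back to original time. Let $t^\star$ be the original step whose block contains $u^\star$, so $b_{u^\star}=a_{t^\star}$ and $u^\star-1\ge P_{t^\star}$. The online algorithm for the general problem simulates the expanded algorithm: at the start of original step $t$ it already knows $a_1,\ldots,a_t$, and hence knows $b_1,\ldots,b_{P_{t+1}}$, so it can execute all of the expanded decisions in that block; if any of them says ``buy,'' the original algorithm buys at step $t$, otherwise it rents. This yields an honest online stopping time $t^\star$ with real cost $\sum_{i<t^\star}p_i+a_{t^\star}=P_{t^\star}+a_{t^\star}\le u^\star-1+b_{u^\star}$, and combining gives the desired $\frac{e}{e-1}$ bound.

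There is no real obstacle, only one point that must be checked carefully: that the expanded algorithm can indeed be simulated online in the original timeline. This is immediate once one notes that $b_u$ is constant on each block and equals $a_t$, so all $b$-values needed to make a decision within block $t$ are available at the start of original step $t$. The integrality assumption on $p_i$ (already justified in the opening of Section~\ref{sec:generalprobingcost} by rounding up, at a factor $2$ loss absorbed elsewhere) is what makes the block expansion well-defined.
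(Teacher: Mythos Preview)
Your proposal is correct and follows essentially the same approach as the paper: both expand each original step $t$ into $p_t$ unit-rent sub-steps with constant buy cost $a_t$, apply Lemma~\ref{lem:gen_ski_rental} to the expanded (non-increasing) sequence, and then simulate the expanded algorithm block-by-block, noting that buying at the start of a block only lowers the cost compared to the sub-step at which the simulated algorithm buys. Your writeup is in fact a bit more explicit than the paper's about why the offline optima coincide and why the simulation is genuinely online.
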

\begin{proof}[Proof of Lemma~\ref{cor:varying_rent}]
	The case with the general rental cost is equivalent to the following
	unit-rent-cost problem: at time $t=\sum_{i=1}^{j-1}p_i$, the buyer can
	decide to either pay $a_j$ for buying skis, or continue to rent for $p_j$
	consecutive time slots, each with rent cost 1, and then get to see the next
	possible skis buying cost $a_{j+1}$.  The two problems have the same
	offline optimal cost.
	
	To solve the case with general rental cost, we use the algorithm in
	Lemma~\ref{lem:gen_ski_rental} as a subroutine.  Assume that in the
	general-rental-cost case, we have already rented skis for $j-1$ days, the
	total rental cost we have paid is $t\sum_{i=1}^{j-1}p_i$. Now we see the
	next buy and rent values $a_j$ and $p_j$.  To decide what to do at the
	current step, we run the unit-rental-cost algorithm as in the proof of
	Lemma~\ref{lem:gen_ski_rental} for additional $p_j$ time steps without
	doing real probing, because we know the buying cost will not change in the
	next $p_j$ unit time steps.  If the algorithm with unit rental cost does
	not stop in the following $p_j$ unit time steps in the simulation, we
	decide to do the same, i.e. paying the rental cost $p_j$ at the current
	time step. If the algorithm with unit rental cost stops and buys skis in
	the following $p_j$ unit time steps in the simulation, we decide to buy
	skis immediately, which results in a better total cost than in the
	corresponding unit-rental-cost case.  Since the algorithm in the previous
	lemma pays an $\frac{e}{e-1}$-approx to the optimal offline cost of the
	corresponding unit-rental-cost case, our algorithm for the
	general-rental-case is no worse than it, thus an $\frac{e}{e-1}$-approx to
	the optimal offline cost of the general-rental-cost case.
	
  \end{proof}

\subsection{Linear program formulations}\label{sec:lp-general}
To get the linear program relaxation of the optimal Non-Adaptive strategy for
selecting one box, we only need to change the objective function of the linear
program.
\begin{alignat}{3}
  \text{minimize}  \quad & \sum_{i\in \boxes} x_i p_i   & \quad  + \quad
  & \frac{1}{|\scenario|}\sum_{i\in \boxes,s\in \scenario}c_{is}z_{is}
  & \tag{LP-NA-General} \label{lp-na-general}\\
  \text{subject to}\quad & \sum_{i\in \boxes}z_{is} & \quad  = \quad   & 1, 				& \forall s\in \scenario \notag\\
						 & \hspace{0.7cm} z_{is} 			 & \quad \leq \quad & x_i , & \forall i\in \boxes, s\in \scenario \notag\\
						 &  \hspace{0.2cm} x_i, z_{is} 		 &\quad \in \quad & [0,1] & \forall i\in \boxes, s\in \scenario \notag
\end{alignat}
For the LP of optimal SPA strategy for selecting one box, we need to account
for the probing time of every box in the constraint. In order to do that, we
will require that every box is being probed for $p_i$ consecutive steps:
$x_{it}=1$ means that box $i$ has been probed since time $t-p_i+1$, and the
probing of the box finishes at time $t$. Thus at each time step $t$, there are
$\sum_{i\in\boxes}\sum_{t\leq t'\leq t+p_i-1}x_{it'}$ boxes under probing, and
this should be upper bounded by 1. The rest of the program will be the same.
Since the probing time of each box is polynomially bounded, such LP still has a
polynomial size.
  \begin{align}
    & \text{minimize} & \frac{1}{|\scenario|} \sum_{i\in\boxes, s\in
    \scenario, t\in \timeset}tz_{ist} & + \frac{1}{|\scenario|}\sum_{i\in
                                   \boxes ,s\in \scenario,t\in \timeset}
                                   c_{is}z_{ist} &  \tag{LP-SPA-General} \label{lp-spa-general}\\
    & \text{subject to} & \sum_{i\in \boxes}\sum_{t \leq t' \leq t+p_i-1}x_{it'} & \leq 1, & \forall t\in \timeset \label{eq:LP_PA_General_change}\\
    & & \sum_{t\in \timeset}x_{it}  & \leq 1, & \forall i\in \boxes \notag \\
    & & z_{ist} & \leq x_{it}, & \forall s\in \scenario,i\in \boxes,t\in\timeset \notag\\
    & & \sum_{t'\in\timeset, i\in \boxes}z_{ist'} & = 1, & \forall s\in \scenario \notag\\
    & & x_{it}, z_{ist} & \in [0,1]  & \forall s\in \scenario,i \in \boxes, t\in \timeset \notag
  \end{align}
For the case of selecting $k$ boxes or picking a matroid basis of rank $k$, the
change to the LP of SPA strategy would be the same: replacing the first
constraint ``$\sum_{i\in\boxes}x_{it}=1$'' by \eqref{eq:LP_PA_General_change}.

\subsection{SPA vs NA: selecting a single item}
We show that Algorithm~\ref{alg:PAvsNA_1_cov} works for the
general-probing-time case with approximation ratio only losing a factor of 2. 
\begin{lemma}\label{lem:pavsna_exists_sequence_general}
	In general-probing-times case, we can efficiently compute a scenario-aware
	partially-adaptive strategy with competitive ratio $2$ against the optimal
	non-adaptive strategy. 
\end{lemma}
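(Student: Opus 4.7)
The plan is to adapt Algorithm~\ref{alg:PAvsNA_1_cov} to the general-probing-time setting using the LP relaxation~\eqref{lp-na-general}, after first rounding every $p_i$ up to its ceiling. Given an optimal fractional solution $(\bm{x},\bm{z})$ on the rounded instance, the SPA strategy proceeds in rounds: in each round it samples a box $i$ with probability $x_i/\sum_{j}x_j$, probes it for $p_i$ time units, observes $c_{is}$, and then with probability $z_{is}/x_i$ selects that box and halts. The probing order is determined by this process independently of $s$; only the stopping/selection rule uses the scenario.

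The crucial observation is that the per-round stopping probability is independent of which box is picked:
\[
	\Pr{\text{stop in a given round}} = \sum_{i\in\boxes}\frac{x_i}{\sum_{j}x_j}\cdot\frac{z_{is}}{x_i} = \frac{1}{\sum_{j}x_j},
\]
where the last equality uses LP constraint~\eqref{eq:mLP_NA_select_1}. Consequently the expected number of rounds is $\sum_{j}x_j$, and since the expected probing cost per round equals $\sum_{i}(x_i/\sum_{j}x_j)\,p_i$, the total expected probing time is exactly $\sum_{i}x_ip_i$, matching the probing part of the LP objective. For the selection cost, conditioning on the stopping round, the posterior over which box is selected is $(z_{is})_i$ (normalized by~\eqref{eq:mLP_NA_select_1}), so $\E{}{\alg_{c,s}}\leq \sum_{i}z_{is}c_{is}$. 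Averaging over scenarios, the expected total cost of the SPA strategy is at most the LP optimum of~\eqref{lp-na-general} on the rounded instance.

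Finally, as noted in the opening of Section~\ref{sec:generalprobingcost}, rounding each $p_i$ up to its ceiling inflates any strategy's objective value by at most a factor of $2$. Hence the LP optimum on the rounded instance is at most twice the optimal non-adaptive cost on the original instance, giving the claimed $2$-approximation. The argument transports essentially verbatim from Lemma~\ref{lem:pavsna_exists_sequence}; because the per-round stopping probability decouples from the $p_i$'s, there is no real obstacle, and the only loss relative to the unit-cost case is the factor of $2$ introduced by the upfront integer rounding.
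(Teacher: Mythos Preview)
Your proposal is correct and proves the stated bound, but it takes a different route from the paper and one step is justified imprecisely.

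The paper does not invoke the integer rounding of the $p_i$'s here. It runs Algorithm~\ref{alg:PAvsNA_1_cov} directly and obtains the factor $2$ from a looser analysis of the probing time: it splits $\E{}{\alg_t}$ into the contribution of non-stopping rounds and the stopping round, bounds the former via $\E{}{\text{probing time at }t\mid t\text{ not stopping}}\le \E{}{\text{probing time at }t}/\Pr{t\text{ not stopping}}$, and bounds the latter by $\sum_i x_i p_i$, summing to $2\opt_t$. Your route is different and in fact sharper: your product calculation yields $\E{}{\alg_t}=\opt_t$ exactly, and nothing in it uses integrality of the $p_i$'s, so the upfront rounding is unnecessary and you have actually established competitive ratio $1$ rather than $2$.

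The imprecision is the sentence ``the per-round stopping probability is independent of which box is picked.'' Conditioned on sampling box $i$ in a round, the stopping probability is $z_{is}/x_i$, which \emph{does} depend on $i$; so the stopping event and that round's probing cost are correlated. What makes the identity $\E{}{\alg_t}=\E{}{\tau}\cdot\E{}{C_1}$ valid is Wald's equation: the event $\{\tau\ge t\}$ is determined by rounds $1,\dots,t-1$ and, since rounds are i.i.d., is independent of the round-$t$ probing cost $C_t$, whence $\E{}{\sum_{t\le\tau}C_t}=\sum_t\E{}{C_t}\Pr{\tau\ge t}=\E{}{C_1}\E{}{\tau}=\opt_t$. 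With this one-line justification in place, your argument is complete and strictly tighter than the paper's.
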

\begin{proof}
  The analysis of $\alg_c$ remains the same,
  i.e. $\E{}{\alg_{c}}\leq\opt_{c}$.  Now we consider $\alg_t$. Notice
  that each step of the algorithm for constructing the probing order
  is completely independent, with stopping probability
  $\frac{1}{\sum_{i\in\boxes}x_i}$ at each point. However, the
  ``length'' of each step depends on the probing time for the box
  picked for that step. Let $\tau$ denote the step at which we
  stop. We have $\E{}{\tau}=\sum_{i\in\boxes}x_i$. For any step
  $t<\tau$, the expected probing time for this step is
\[
	\E{}{\textrm{probing time at step }t|t\textrm{ is not stopping time}}<\frac{\E{}{\textrm{probing time at step }t}}{\Pr{t\textrm{ is not stopping time}}}=\frac{\sum_{i\in\boxes}x_ip_i/\sum_{i\in\boxes}x_i}{1-1/\sum_{i\in\boxes}x_i}.
	\]
Thus the expected total probing time is
\begin{eqnarray*}
\E{}{\alg_t}&=&\E{}{\textrm{probing time at steps }<\tau}+\E{}{\textrm{probing time at step }\tau}\\
&\leq& \frac{\sum_{i\in\boxes}x_ip_i/\sum_{i\in\boxes}x_i}{1-1/\sum_{i\in\boxes}x_i}(\E{}{\tau}-1)+\sum_{i\in\boxes}x_ip_i\\
&=&2\sum_{i\in\boxes}x_ip_i=2\opt_t.
\end{eqnarray*}
Thus $\E{}{\alg}\leq 2\opt$.
\end{proof}

\subsection{SPA vs PA: \texorpdfstring{$k$}{}-coverage and matroid base}
Now we show the algorithms for the case of selecting $k$ boxes and selecting a
matroid base still works when we have general probing times. The only
difference is that the algorithms will now base on the modified LP in
Section~\ref{sec:lp-general}.

In the entire analysis of the two cases, the only place where we employ probing
times of boxes is when we try to bound the expected total probing time of each
phase $\ell$ in \eqref{eqn:kquerytime} and \eqref{eqn:matroidquerytime}
respectively.  These expected probing time terms, $\sum_{t'\leq 2^\ell}x_{it'}$
in \eqref{eqn:kquerytime} and $\sum_{t'\leq t}x_{it'}$ in
\eqref{eqn:matroidquerytime}, will get changed to $\sum_{t'\leq
2^\ell}p_ix_{it'}$ and $\sum_{t'\leq t}p_ix_{it'}$ respectively.

Now we argue that the proof will still go through step by step, and it suffices
to show that $\sum_{i\in\boxes}\sum_{t'\leq t}p_ix_{it'}\leq t$. Sum up LP
constraint \eqref{eq:LP_PA_General_change} from $1$ to $t$, we have
\[
	\sum_{i\in \boxes}\sum_{t'\leq t}\sum_{t' \leq t'' \leq t'+p_i-1}x_{it''}\leq t.
\]

Notice that for any $t'\leq t$, $x_{it'}$ appears exactly $p_i$ times in the
sum. The counting argument implies
\[	
	\sum_{i\in\boxes}\sum_{t'\leq t}p_ix_{it'}\leq \sum_{i\in \boxes}\sum_{t'\leq t}\sum_{t' \leq t'' \leq t'+p_i-1}x_{it''}\leq t.
	\]

	Observe that for the case of $k=1$ discussed in Section~\ref{sec:pavspa}
	this extension implies the $124$-approximation of
	Theorem~\ref{thm:kcoverage}. We believe that the argument can be tightened
	to obtain a much better factor for $k=1$ but do not attempt to optimize the
	constant. We also note that our reduction to MSSC in
	Section~\ref{sec:pavspa} continues to work with general probing times,
	however this general setting has not been studied previously for MSSC.

\section{Inapproximability of the profit maximization variant}
\label{sec:max}

In this section we consider the profit maximization variant of the problem
discussed above. The boxes now contain some prize value $v_{is}$ for each box $i$
in scenario $s$, and we want to
maximize expected \emph{profit}. Formally, let $\mathcal{P}_s$ be the set of
probed boxes in scenario $s$, our objective is to maximize

\[
	\E{s}{ \max_{i\in \mathcal{P}_s} v_{is} - |\mathcal{P}_s|}.
\]

It turns out that, contrary to the minimization case, obtaining a constant
approximation in this setting is impossible, as the following theorem shows.

\begin{theorem}
  Assuming P$\neq$NP, no computationally efficient fully-adaptive algorithm can
  approximate the optimal non-adaptive profit within a constant factor.
\label{thm:max_LB}
\end{theorem}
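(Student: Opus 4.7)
The plan is to give an approximation-preserving reduction from Set Cover, analogous in spirit to the proof of Theorem~\ref{thm:min_LB}, but exploiting the structural asymmetry peculiar to profit maximization: an algorithm can always opt out of probing and obtain zero profit, so any multiplicative approximation ratio must be measured against a strictly positive NA benchmark. This turns coverage failures by FA into concrete profit losses.

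First, I would take an instance $\mathcal{SC} = ([m], \{S_1, \ldots, S_n\})$ of Set Cover that is hard to approximate (by Lemma~\ref{lem:LB_max_coverage}) and build a search instance. Each element $e_j$ becomes a reward scenario $s_j$ of probability $(1-p)/m$, and I add a null scenario $X$ of probability $p$ in which every box has value $0$. For a large parameter $V$, set $v_{ij} = V$ if $e_j \in S_i$ and $v_{ij} = 0$ otherwise. The optimal NA strategy probes an optimal cover of $\mathcal{SC}$ and thus obtains max value $V$ on every reward scenario, giving profit $(1-p)V - \opt_{SC}$; by choosing $V$ sufficiently large compared to $\opt_{SC}$ and tuning $p$, this is $\Theta(V)$.

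Next, I would analyze any efficient FA strategy. Since all values lie in $\{0, V\}$, without loss of generality the algorithm stops the moment it sees a $V$ box (further probing can only decrease profit). Up until that moment its history is uniformly all zeros across scenarios, so its probing order is a \emph{fixed} sequence, and its fallback stopping time on the all-zero history is a fixed integer $N$. Invoking Lemma~\ref{lem:LB_max_coverage}, on the hard instance the first $N$ probes cover at most a $1 - (1-(1+\e)/\opt_{SC})^N$ fraction of reward scenarios. The expected FA profit is then bounded above by
\[
(1-p) V \left( 1 - \left(1-\tfrac{1+\e}{\opt_{SC}}\right)^N \right) - p N - (1-p)\cdot \mathbb{E}[\text{probes until } V \text{ found, or } N].
\]
I would then argue that no choice of $N$ makes this quantity within any constant factor of $(1-p)V - \opt_{SC}$.

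The hard part will be passing from the essentially \emph{constant-factor} separation that a direct application of the bicriteria Set Cover bound yields, to ruling out \emph{any} constant-factor approximation. My main route would be to tune $V$ and $p$ so that the FA profit is driven to essentially zero (or negative) while the NA profit remains strictly positive: if one takes $V$ just above the threshold at which covering in expectation pays for itself, then for FA any positive $N$ either pays too much on the null scenario $X$ or covers too few reward scenarios, so no FA stopping rule yields positive expected profit, whereas NA still does. A secondary route, should a tighter amplification be needed, is a product/replication construction that stacks many independent copies of $\mathcal{SC}$, forcing FA to make simultaneous progress on all copies and thereby exponentiating the coverage gap; this is the step I would expect to be most delicate.
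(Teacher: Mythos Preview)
Your proposal is essentially the paper's proof: the same $\{0,V\}$ construction with a null scenario $X$, the same observation that FA reduces to a fixed probing order with fallback time $N$, and the same invocation of Lemma~\ref{lem:LB_max_coverage} to upper-bound FA's coverage. One caveat: your remark that taking $V$ ``sufficiently large'' makes $\opt_{NA}=\Theta(V)$ is a false start---in that regime FA can also capture a constant fraction of $V$ by probing $\Theta(\opt_{SC})$ boxes, so the gap stays bounded. The paper does exactly what you call your ``main route'': it sets $(1-p)V$ just barely above $\opt_{SC}$ (specifically $(1-p)V=\tfrac{2\e+1}{\e+1}\opt_{SC}$ with $\e\to 0$, $p\to 1$), optimizes over $N$ in closed form via the substitution $(1-\tfrac{1+\e}{\opt_{SC}})^N=e^{-x}$, and shows $\alg_{FA}/\opt_{NA}\to 0$. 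Your secondary replication route is unnecessary.
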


The proof follows similarly to the minimization case, where we use again
Lemma~\ref{lem:LB_max_coverage} to construct a bad instance that can give
arbitrarily bad approximation.

\begin{proof}[Proof of Theorem~\ref{thm:max_LB}]
	Let $H>0$ and $p\in [0,1]$ be appropriate constants, to be determined
	later. Let $\mathcal{SC} = ([m], \{S_1, \ldots, S_n\})$ be a set cover
	instance with $m$ elements and $n$ sets. Denote its optimal value by
	$\opt_{SC}$.  To transform this into an instance of the search problem,
	every element $e_j \in [m]$ corresponds to a scenario $j$, and every set
	$S_i$ to a box $i$. We set $v_{ij}=H$ iff $e_j \in S_i$, otherwise $v_{ij}
	= 0$. We also add a new scenario $X$ with $v_{Xi} = 0,\ \forall i\in [n]$.
	Scenario $X$ occurs with probability $p$ and all the other $m$ scenarios
	happen with probability $(1-p)/m$ each.  Observe that contrary to the
	minimization lower bound of Section~\ref{sec:favsna}, the additional
	scenario $X$ has a low value ($0$ instead of $H$), and the other scenarios
	give a high value ($H$) when covered. 

	In this instance, the profit of optimal non-adaptive strategy is $\opt_{NA}
	\geq (1-p)H - \opt_{SC}$, since we may pay the set-cover cost to find a box
	with value $H$ in every scenario other than $X$.

	Now let us consider any computationally efficient algorithm $\al$ that
	returns a fully adaptive strategy for such an instance. Since the values of
	the boxes are $0$ or $H$, we may assume without loss of generality that any
	fully-adaptive strategy stops probing as soon as it observes a box with
	value $H$ and chooses that box. We say that the adaptive strategy covers a
	scenario when it finds a box of value $H$ in that scenario. Observe that
	similarly to the lower bound of subsection~\ref{sec:favsna}, any FA
	strategy will follow a probing order independent of the scenario, which we
	can then convert to an approximate solution for the underlying set cover
	instance.  Then, for any constant $\e>0$, by
	Lemma~\ref{lem:LB_max_coverage}, there must exist a set cover instance and
	correspondingly an instance of the search problem, such that for the
	adaptive strategy returned by the algorithm for that instance, for every
	$k$, the fraction of scenarios other than $X$ covered before step $k$ is at
	most $1-\lp(1-\frac{1+\e}{\opt_{SC}}\rp)^{k-1}$. Consider such an instance
	and let $N$ denote the maximum number of boxes the strategy probes before
	stopping to return a box of value $H$. 
		
	The same as \eqref{eqn:lb_probingcost}, the expected query time of the strategy is at least
	\begin{eqnarray*}
	pN+(1-p)\lp(1-\lp(1-\frac{1+\e}{\opt_{SC}}\rp)^N\rp)\frac{\opt_{SC}}{1+\e}.
	\end{eqnarray*}
	On the other hand, the expected value obtained by the fully-adaptive strategy is at most
	\[H\cdot\Pr{s\neq X\wedge \text{FA finds value $H$ in first $N$ steps}}\leq (1-p)H\lp(1-(1-\frac{1+\e}{\opt_{SC}})^{N}\rp).\]
	Thus the profit of such fully-adaptive strategy is upper bounded by
	\begin{equation*}
			\alg_{FA} \leq (1-p) H \cdot \lp( 1 - \lp( 1 - \frac{1 + \e}{\opt_{SC}} \rp)^N \rp) 
			- (1-p) \lp( 1 - \lp(1-\frac{1+\e}{\opt_{SC}}\rp)^N\rp)\frac{\opt_{SC}}{1+\e} - pN.
	\end{equation*}
	Let $x$ be defined so that 	$( 1 - \frac{1 + \e}{\opt_{SC}})^N =e^{-x}$.
	Then, $N=-x/\ln(1-\frac{1 + \e}{\opt_{SC}})\ge x(\frac{\opt_{SC}}{1 +
	\e}-1)$. Substituting these expressions in the above equation we get
	\begin{equation*}
			\alg_{FA}(x) \leq  (1-e^{-x}) (1-p)\lp( H - \frac{\opt_{SC}}{1+\e} \rp) - px\lp(\frac{\opt_{SC}}{1+\e}-1\rp).
	\end{equation*}

	\noindent
	Observe that the right hand side is maximized at $x= \ln \lp(
	\frac{(1-p)\lp( (1+\e)H - \opt_{SC}\rp) }{p (\opt_{SC}-(1+\e))}\rp) $. By setting $(1-p)H = \frac{2\e+1}{\e+1}\opt_{SC}$, 
	$\e\to 0$ and $p\to 1$, we get
	\begin{eqnarray*}
			\frac{\alg_{FA}}{\opt_{NA}} \leq 2 - \frac{p}{\e}\log \lp( \frac{2\e}{p}+1\rp)\to 0
	\end{eqnarray*}

	when $\opt_{SC}\to\infty$. Thus no efficient fully-adaptive algorithm can
	approximate the optimal non-adaptive profit within any constant factor.
	\end{proof}

\newpage 

\appendix
\section{Proofs from Section~\ref{sec:reduction}}\label{sec:lemma_proof}

\begin{proof}[Proof of Lemma~\ref{lem:gen_ski_rental}]
	We prove that Algorithm \ref{alg:ski_rental} satisfies the lemma.  The
	algorithm sees an instance $\mathcal{I} = \{a_1, a_2, \ldots \}$ and
	essentially starts a new Ski Rental problem every time it finds a lower
	$a_t+t-1$ value, using the $\frac{e}{e-1}$-competitive 
	randomized Ski Rental algorithm \cite{KarlManaMcgeOwic1990} as a black
	box, to choose the new stopping time $\tau$.  

	\begin{algorithm}
		\KwData{ ski($C$): random stopping time according to Ski Rental with buying cost $C$}
		\KwIn{Sequence $a_1, a_2, \ldots $ of buying costs}
		$C = \infty$, $\tau = \infty$\\
		\ForEach{time $t\geq1$}{
			\If{$a_t+t-1 < C$}{
					$C = a_t + t-1$\\ \label{algline:step}
				$\tau = t-1+$ ski$(C-t+1)$\\
			}
			\If{$t = \tau$}{
				Buy at price $\min_{t'\leq t}\{a_{t'}\}$
			}
		}
		\caption{Ski Rental for time-varying buying prices}
		\label{alg:ski_rental}
	\end{algorithm}

	Every time the algorithm changes the current cost value $C$
	(line~\ref{algline:step} of Algorithm~\ref{alg:ski_rental}) we say the
	sequence $\{a_t\}$ takes a \emph{step}. 
	Suppose that the sequence takes a step at time $1=t_N<t_{N-1}<\cdots<t_1$. Notice that
	$N\leq a_1$, since the sum of renting cost and buying cost after time $t=a_1$ will be 
	at least $a_1$, which is the total cost at time 1.
	Denote by $\alg^k$ the algorithm's cost on instance $a_{t_k},a_{t_{k+1}},\cdots$, which
	is the truncated instance that takes $k$ steps until the end. 
	Define $\opt^k = \min_t\{a_t + t - 1\}$ be the optimal cost for the
	same instance that the algorithm takes $k$ steps until the end. We claim that $\E{}{ \alg^k
	} \leq e/(e-1)\opt^k$ for any $k$, and prove the claim using induction on
	the number of steps. 
	
	Observe that for
	$k = 1$, since the algorithm only takes a step at the beginning, we have 
	$a_t + t -1 \geq a_1$ for any $t\geq1$. In this case, $\opt^1=a_1$ 
	and $\alg$ only considers $a_1$ as buying cost. This is
	exactly a special case of the traditional Ski Rental problem with one
	buying cost. Therefore we get $\alg^1\leq e/(e-1)\opt^1$.
	
	\noindent
 	For any $k>1$, denote by $T$ be the first time the algorithm takes a step, $\tau_0$ 
	the first stopping time set by the algorithm. The expected cost of the algorithm is
	\begin{align*}
		\E{}{ \alg^k }  &=  \E{}{ \alg^k \mathbbm{1}_{ \{ \tau_0 \leq T \} } } + \E{}{ (T + \alg^{k-1}) \mathbbm{1}_{ \{ S > T \} } }&  \\
						&\leq \E{}{ (a_1 + \tau_0) \mathbbm{1}_{ \{ \tau_0 \leq T \} } }   + T \Pr {\tau_0 > T} + \E{}{ \alg^{k-1} }&  \\
						&\leq  \E{}{ (a_1 + \tau_0) \mathbbm{1}_{ \{ \tau_0 \leq T \} } }   
					+ T \Pr {\tau_0 > T} + \frac{e}{e-1} \opt^{k-1}& \\
		  &\leq \frac{e}{e-1} T + \frac{e}{e-1} (\opt^{k}-T)&\\
		  &= \frac{e}{e-1} \opt^k.
	\end{align*}
	Here the second line comes from the algorithm's cost when $\tau_0\leq T$ is exactly $a_1+\tau_0$, which is paying the buying cost $a_1$ at time $1$ and the renting cost for $\tau_0$ rounds. The third line is by inductive hypothesis. The fourth line is true since for a ski-rental instance with $T$ days and buying cost $a_1>T$, renting for $T$ days is optimal, while the strategy of using $\tau_0$ as stopping time has cost $\E{}{ (a_1 + \tau_0) \mathbbm{1}_{ \{ \tau_0 \leq T \} } } + T \Pr {\tau_0 > T}$ should give $\frac{e}{e-1}$-approximation to optimal. The last line comes from the fact that $\opt^k=\opt^{k-1}+T$, as the optimal solution always choose to rent in the first $T$ steps.

	By induction, $\alg^N\leq\frac{e}{e-1}\opt^N$.
  \end{proof}





  \begin{proof}[Proof of Corollary~\ref{lem:scenario_aware_approx}]
	Recall that a scenario-aware strategy consists of a sequence and a scenario
	dependent stopping rule. Let  $(\sigma,  \tau)$ be the scenario-aware
	partially-adaptive strategy. For the case with unit query time, 
	by running the algorithm described in the
	proof of Lemma~\ref{lem:gen_ski_rental} using sequence $\sigma$ as
	input\footnote{For this reduction, we set the costs $a_t=1 + \min_{i\leq t}
			(c_{\sigma(i)s} )$ so that we have a decreasing sequence as
	Lemma~\ref{lem:gen_ski_rental} requires.}, we obtain a stopping rule that
	does not depend on the scenario and only worsens the approximation by a
	factor of $e/(e-1)$. Similarly for the case of general probing time, we will use the algorithm
	for Corollary~\ref{cor:varying_rent}.
  \end{proof}

\section{Proofs from Section~\ref{sec:favsna}} \label{sec:apndx_minLB}

\begin{proof}[Proof of Lemma~\ref{lem:LB_max_coverage}]
	Assume that there exists an algorithm $\mathcal{A}$ such that for every
	instance of Set Cover finds $k$ sets that cover at least
	$1-(1-\frac{1+\e}{\opt})^k$ of the elements.  Given an instance of Set
	Cover, with a ground set of $n$ elements, we repeatedly run $\mathcal{A}$
	on the set of uncovered elements left at each round. Every time, we create
	a new instance with ground set composing of only the elements left uncovered in the
	previous round. Using the guarantee for $\mathcal{A}$ in each round $i$ we
	cover at least $(1-(1-\frac{1+\e}{\opt})^{k_i})$ for some $k_i\in\lp[
	1,\log n \frac{\opt}{1+\e} \rp]$.

	Denote by $z$ the number of rounds we need to cover all elements and by
	$k_i$ the number of the elements we cover at round $i$. In the end of round
	$z$ there are 
	\begin{equation}
		n\prod_{i=1}^z \lp(1 - \frac{1+\e}{\opt}\rp)^{k_i} 
	\end{equation}

\noindent
elements left uncovered. When the above quantity equals $1$, we are left with
$1$ element and the following holds
	\[
		\sum_{i=1}^z k_i = 	\frac{\log n}{\log \lp( \frac{\opt}{\opt-1-\e} \rp)} < \log n \frac{\opt}{1+\e}
	\]
\noindent
	where the first sum is exactly the cost of covering all the elements but
	one\footnote{By adding $1$ to $\sum_{i=1}^z k_i$, the inequalities still
	hold.}, and for the inequality we used Lemma~\ref{lem:ineq_LB_max} with $x =
	\opt$ and $c = 1 + \e$. This result directly implies a better than $(1 -
	\e')\ln n$ approximation for Set Cover, which is impossible unless $P=NP$
	\cite{DinuSteu2014}.
		
\end{proof}

\begin{lemma}\label{lem:ineq_LB_max}
	\[ \lp( \log \lp(\frac{x}{x-c} \rp) \rp)^{-1}	< \frac{x}{c} \] 
	for any $x > c > 0$.
\end{lemma}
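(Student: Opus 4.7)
The plan is to reduce the stated inequality to a standard inequality about the logarithm by making the substitution $u = c/x$. Since $x > c > 0$, we have $u \in (0,1)$, and then $\frac{x}{x-c} = \frac{1}{1-u}$, so the claim becomes $\frac{1}{\ln(1/(1-u))} < \frac{1}{u}$, equivalently
\[
-\ln(1-u) > u \quad \text{for all } u \in (0,1).
\]
Thus the entire lemma boils down to this one-variable inequality.

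I would prove this one-variable inequality in one of two equivalent ways. The first is analytic: let $f(u) = -\ln(1-u) - u$, observe $f(0)=0$, and compute $f'(u) = \frac{1}{1-u} - 1 = \frac{u}{1-u} > 0$ on $(0,1)$, so $f$ is strictly increasing and hence strictly positive on $(0,1)$. The second approach is to invoke the standard tangent-line bound $e^{-u} > 1 - u$ (strict for $u \neq 0$, since $e^{-u}$ is strictly convex and $1-u$ is its tangent line at $0$), and take logarithms of both sides, which are positive for $u \in (0,1)$, to get $-u > \ln(1-u)$. Either route is a few lines.

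Substituting back $u = c/x$ recovers the original claim. There is no real obstacle here — the inequality is a textbook fact about $-\ln(1-u)$, and the only subtle point is to notice that $x > c > 0$ guarantees $u \in (0,1)$ so that both sides of the inequality are well defined and positive, allowing the manipulations (taking reciprocals, taking logs of positive quantities) to preserve the direction of the inequality.
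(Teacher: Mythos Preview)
Your proof is correct and follows essentially the same approach as the paper: both reduce the two-variable inequality to a one-variable logarithm inequality via substitution and then verify that inequality by a derivative computation starting from value $0$ at the endpoint. The only cosmetic difference is the choice of substitution variable --- you set $u = c/x \in (0,1)$ and prove $-\ln(1-u) > u$, whereas the paper sets (in effect) $y = c/(x-c) > 0$ and proves $\log(1+y) > y/(1+y)$; these two inequalities are equivalent under $y = u/(1-u)$.
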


\begin{proof}
	First we prove the following inequality
	\begin{equation}\label{eq:log_ineq}
			\log (x+1) > \frac{x}{x+1}  \text{ for any }  x>0.
	\end{equation}
	Let $g(x) = \log(x+1) - \frac{x}{x+1}$. The derivative of $g(x)$ is $g'(x)
	= \frac{1}{x+1} - \frac{1}{(x+1)^2}$ and $g'(x)>0$ for $x>0$. Then $g$ is
	increasing and $\lim_{x\rightarrow 0} g(x)=0$, therefore $g(x)>0$ for
	$x>0$, and the inequality follows. By setting $x+1$ to be $x/(x-c)$ in
	inequality \ref{eq:log_ineq}, the lemma follows.
	
\end{proof}

\section{Proofs from Section~\ref{sec:pavspa}}  \label{sec:apndx_pavspa}

\begin{proof}[Proof of Lemma~\ref{lem:integral_bound}]
	\begin{align*}
	\int_{t-1}^{t}\lceil\beta t'\rceil dt'
		&\leq (\lceil\beta t\rceil-1)\lp(\frac{\lceil\beta t\rceil-1}{\beta}-t+1\rp)
				+\lceil\beta t\rceil\lp( t - \frac{\lceil\beta t\rceil-1}{\beta}\rp) \\
		& = t - \frac{\lceil \beta t\rceil-1}{\beta}-1+\lceil \beta t \rceil\\
		& < \beta\left(t-\frac{\lceil \beta t\rceil-1}{\beta}\right)-1+\lceil \beta t \rceil\\
		& =\beta t.
	\end{align*}

	where the first line is true since for any $t'\leq \frac{\lceil\beta
	t\rceil-1}{\beta}$, $\lceil\beta t'\rceil\leq \lceil\beta t\rceil-1$; while
	for any $t'$ such that $\frac{\lceil\beta t\rceil-1}{\beta}<t'\leq t$,
	$\lceil\beta t'\rceil= \lceil\beta t\rceil$. On the third line we used that
	$\beta>1$ and $t>\frac{\lceil\beta t\rceil-1}{\beta}$. 
	\end{proof}

\section{Proofs from Section~\ref{sec:extensions}} 
\label{sec:apndx_extension}

\begin{proof}[Proof of Lemma~\ref{lem:rank}]
	Denote by $A_j$ the span of the first $j$ elements selected. Notice that for all $i \in \boxes\setminus A_j$, 
	the probability of selecting box $i$ is $\mconst \ln k\frac{\sum_{t'\le t} z_{ist'}}{t}$. 
	Thus, the probability of selecting a box that increases the rank by $1$ at step $t\geq t_s^*$ is 
 	
	\begin{align*}
		\Pr { \text{rank } j \text{ to } j+1}
			&= 1-\prod_{i\in \boxes\setminus A_j} \left(1-\mconst\ln k \frac{\sum_{t'\leq t}z_{ist'}}{t}\right)	\\
			&\geq 1-\prod_{i\in \boxes\setminus A_j} \left(1-\mconst\ln k \frac{\sum_{t'\leq t}z_{ist'}}{2^\ell t^*_s}\right)	\\
			&\geq 1-\exp\left(-\sum_{i\in \boxes\setminus A_j}\mconst\ln k \frac{\sum_{t'\leq t}z_{ist'}}{2^\ell t^*_s}\right) \\
			& \geq1-\exp\left(-\frac{\mconst \ln k (k-j)}{2^{\ell+1} t_s^*}\right)\\
			&\geq \min\left(\frac{1}{2},\frac{\mconst\ln k (k-j)}{2^{\ell+2} t_s^*}\right),
	\end{align*}
	here the second line follows from $t\leq t_s^*$ in phase $\ell$; the third line follows from 
	$\prod_{i}(1-a_i)\leq e^{-\sum_{i}a_i}$ for any $a_1,a_2,\cdots\in[0,1]$; the fourth line follows
	from constraint (\ref{eq:LP_mat_subsets_rank}) and $y_{st}\geq\frac{1}{2}$ 
	by (\ref{eq:opt_ts_kcoverage}); the last line follows from $1-e^{-a}\geq\frac{1}{2}\min(1,a)$. Thus the expected total steps until a full rank basis is found is 
	
	\[
		\E{}{\mathcal{X}} = \sum_{j = 0}^{k-1} \E{}{\text{steps from rank } j \text{ to } j+1}
		\leq \sum_{j = 0}^{k-1} \left(\frac{2^{\ell+2}t^*_s}{\mconst(k-j)\ln k}+2\right) 
		\leq \frac{2^{\ell+2}}{\mconst} t^*_s+2k\leq \frac{2^{\ell+2}}{\mconst}t_s^*+4t_s^*.
	\]
	Here the last equality is by $t_s^*\geq\frac{k}{2}$.
\end{proof}

\bibliographystyle{alpha}
\bibliography{allrefs}

\end{document}